\documentclass[conference, 10pt]{IEEEtran}
\IEEEoverridecommandlockouts

\usepackage{cite}
\usepackage{amsmath,amssymb,amsfonts}
\usepackage{amsthm}
\usepackage{dsfont}
\usepackage{enumitem}
\usepackage{mathtools}
\usepackage{amsthm}
\usepackage{graphicx}
\usepackage{textcomp}
\usepackage{xcolor}
\usepackage{algpseudocode}
\usepackage{enumitem}
\usepackage{tikz}
\usetikzlibrary{positioning}
\usetikzlibrary{arrows.meta, positioning}
\newcommand{\remove}[1]{}

\DeclareMathOperator*{\argmin}{argmin}

\usepackage{array}
\usepackage{multirow}
\usepackage{caption}
\usepackage{comment}
\usepackage{subcaption}
\usepackage{multicol}
\usepackage{graphicx}
\usepackage{asymptote}
\usepackage[ruled,vlined]{algorithm2e}
\usepackage{array}

\newtheorem{lemma}{Lemma}
\newtheorem{remark}{Remark}
\def\BibTeX{{\rm B\kern-.05em{\sc i\kern-.025em b}\kern-.08em
    T\kern-.1667em\lower.7ex\hbox{E}\kern-.125emX}}

\newtheorem{theorem}{Theorem}
\begin{document}
\setlength{\baselineskip}{11.7pt}
\title{Lagrange Index based Scheduling for Minimizing Age of Updates from Heterogeneous Sources}
\author{Aniket~Mukherjee, Joy~Kuri and~Chandramani~Singh
}

\maketitle
\thispagestyle{plain}
\pagestyle{plain}
\begin{abstract}
Modern sensing systems generate heterogeneous updates ranging from small status packets to large data objects. 
\remove{AoI has emerged as a fundamental performance metric for sensing systems, as it captures the timeliness of information at the receiver and directly impacts the accuracy and reliablity of real-time monitoring and control applications.} We study a single-hop wireless uplink network where sensors generate updates at will, each consisting of a sensor dependent number of packets. Under a strict medium-access constraint and non-preemptive (no-switching) transmissions, decision stages become action-dependent and stochastic. We formulate the problem as a restless multi-armed bandit (RMAB) with semi-Markov decision process (SMDP) dynamics and develop a Lagrange index based heuristic for minimizing weighted average AoI cost. For the weighted AoI setting, we utilize the structural properties of the heuristic to enable efficient index computation. Numerical results demonstrate consistent performance gains over existing non-preemptive scheduling policies, providing a practical solution for heterogeneous freshness-aware systems.
\end{abstract}
\begin{IEEEkeywords}
Age of information, Semi-Markov decision process, Restless multiarmed bandit, Lagrange indices.
\end{IEEEkeywords}
\section{Introduction}
Modern sensing systems consist of spatially distributed sensors that continuously monitor physical processes and transmit status updates to a central controller or edge server \cite{Sensing_net_tubaishat}. These updates are often heterogeneous in size, ranging from small scalar measurements to large data objects such as images, video frames, or LIDAR scans. In slotted wireless systems (eg., Wi-Fi 6 and 5G), such updates may span multiple transmission slots, and medium-access constraints limit simultaneous transmissions. As sensing networks increasingly support real-time monitoring, autonomous systems, and cyber-physical applications, efficient scheduling of heterogeneous updates becomes critical to maintaining timely situational awareness\cite{kaul_min_aoi}.

The Age of Information (AoI) quantifies information freshness by measuring the time elapsed since the most recently generated update was successfully delivered \cite{kaul_update, kosta_aoi, kaul_min_aoi}. Unlike traditional delay or throughput metrics, AoI directly captures the timeliness of the received information, making it particularly suitable for sensing and monitoring applications where stale data can degrade estimation accuracy, control performance or tracking reliability. Although AoI is a theoretical metric, its practical value has been demonstrated experimentally; for example, \cite{tripathi2023wiswarm} shows that an AoI-aware WiFi middleware significantly improves information freshness and tracking accuracy in UAV networks compared to standard WiFi-UDP/TCP. 

We consider a set of heterogeneous sources connected to a common receiver through an unreliable wireless channel that allows only one source to transmit at a time~(see Figure~\ref{fig:multi_source_channel}). The sources generate updates of different sizes, which must be delivered to the receiver so as to minimize the long term weighted average age of updates at the receiver. 
A non-preemptive~(non-switching) transmission discipline is employed where a scheduled source keeps the channel until it successfully transmits all the packet pertaining to its update.
We formulate the problem as a Semi-Markov Decision Process~(SMDP). However, this problem suffers from the curse of dimensionality. To address it, we treat the problem as a Restless Multi-Armed Bandit~(RMAB) and develop a Lagrange index based heuristic. To the best of our knowledge, this is the first work to study RMAB formulation of weakly coupled SMDPs and to develop scalable index policies for them. Our heuristic scheduling policy substantially outperforms the existing solutions. 
\vspace{-10mm}
\begin{figure}[htbp]
\centering
\includegraphics[width=0.98\linewidth]{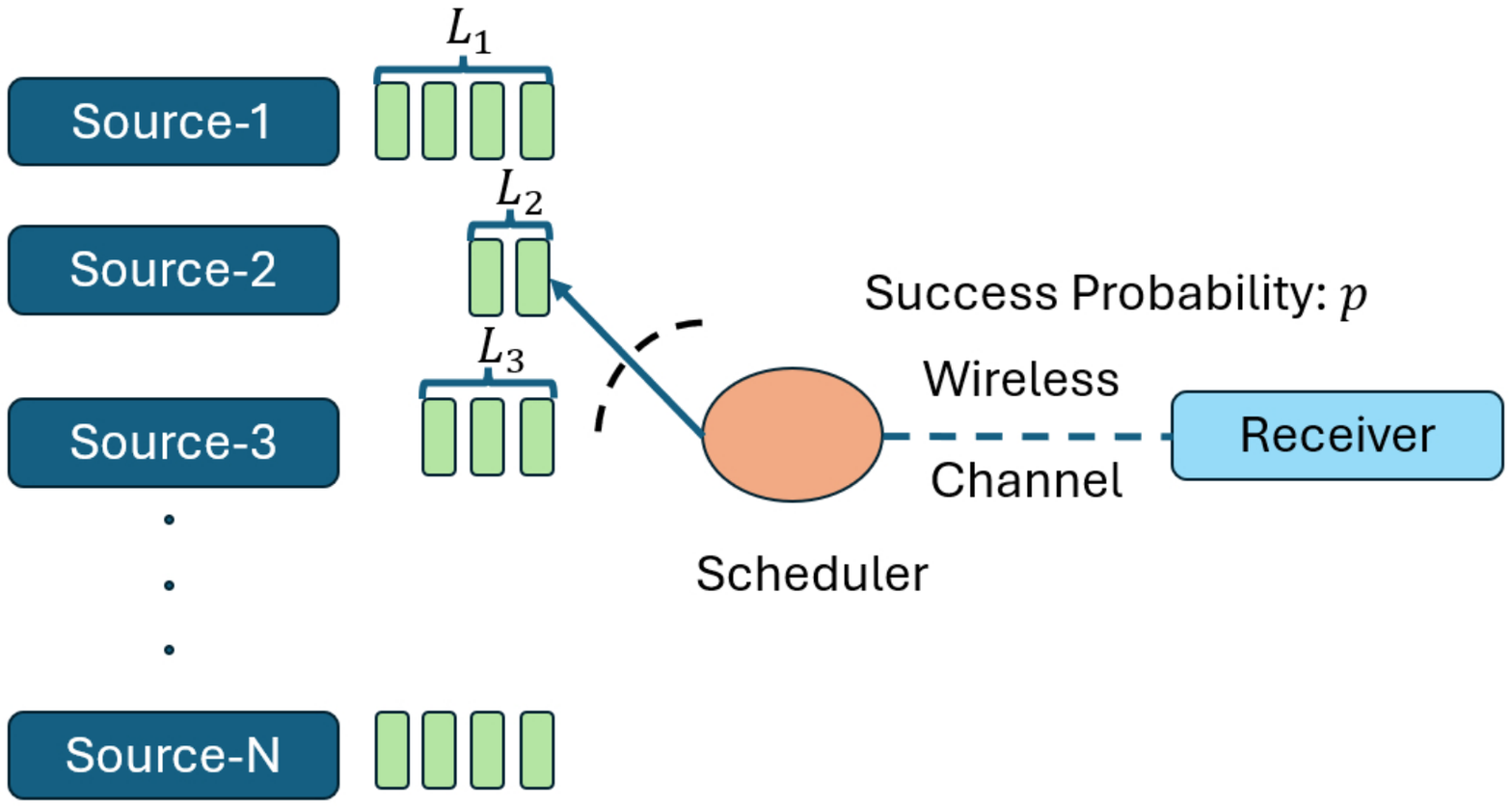}
\caption{$N$ sources with heterogeneous update sizes, connected to a receiver through an unreliable wireless channel.}
\label{fig:multi_source_channel}
\end{figure}
\vspace{-5mm}
\subsection{Related Work}
We now briefly discuss the related work on (a)~AoI minimization and on (b)~Lagrange index based heuristics for RMAB problems.
\subsubsection{AoI Minimization}
The concept of AoI was introduced in \cite{kaul_update}, and a comprehensive overview of its extensions and practical relevance was provided in \cite{yates_survey}. 
The papers concerned with AoI minimization can be categorized as in Table~{\ref{tab:related_works}}. There are two dimensions: (a) Whether all source updates are of equal length or not, and (b) whether one can interrupt transmission of an update from a source before it is complete, and ``switch'' to transmitting another source's update [``switching''], or not [``non-switching''].
\cite{kadota_aoi} studies a wireless network in which a base station generates single-packet updates and, in each slot, selects a user to which it will transmit, over an unreliable channel with user-dependent success probabilities $p_i$ (constant over time). The objective is to minimize the long-term average AoI. The greedy policy was shown to be optimal when $p_i = p$ for all $i$. For the asymmetric case, a Whittle index and a max-weight policy were proposed.  In \cite{tripathi_whittle}, for equal update lengths, the same model as in \cite{kadota_aoi} was considered, but with a nonlinear function of AoI as the objective. The authors established indexability and proposed a Whittle index–based policy. In reality update lengths can be different and we consider differing update lengths in our work.

For heterogeneous sources with reliable channels, \cite{tripathi_pross_compute} considered an AoI model in which each source requires a fixed processing time, followed by a fixed transmission time under a non-preemptive service discipline. They established indexability and proposed a Whittle index policy.  
The problem was formulated as a restless multi-arm bandit and decoupled into single-source Markov Decision Process (MDP). In the resulting single-source MDP, only two actions were considered: when the tagged source is scheduled, the AoI resets to the total processing and transmission time; when it is not scheduled, the AoI increases by one. In heterogeneous systems, this assumption is restrictive, since the cost when the tagged source is not scheduled depends on service duration of the scheduled source.

In \cite{zhou_hetero_aoi} and \cite{zhao}, the same uplink model  as ours (Figure~\ref{fig:multi_source_channel}) was considered. \cite{zhou_hetero_aoi} considers policies that allowed switching and suggested a suboptimal policy to minimize the average AoI. For the non-preemptive setting, \cite{zhao} proposed a stationary randomized policy, referred to as the No-Switching Randomized Policy (NSRP). 

\remove{In heterogeneous non-preemptive systems such as ours, the AoI increment of a tagged source depends explicitly on the service time of the scheduled source, which varies across sources. This coupling across sources fundamentally alters the structure of the single-source problem.}
\begin{table}[htbp]
\caption{Related Works}
\centering
\begin{tabular}{|c|c|c|}
\hline
\textbf{} & \textbf{Equal length update} & \textbf{Unequal length update} \\
\hline
{\textbf{Switching}}&

&{\cite{zhao}, \cite{zhou_hetero_aoi}}\\
\hline
\textbf{Non-switching} & \cite{kadota_aoi}, \cite{tripathi_whittle} & \cite{zhao}, \cite{tripathi_pross_compute} \\
\hline
\end{tabular}
\label{tab:related_works}
\end{table}
\subsubsection{Lagrange Indices for RMABs}
Lagrange index policies for restless multi-armed bandits have been studied through linear programming relaxations in \cite{Verloop_lp, Gast_2024}, where the dual decomposition yields index-type policies and asymptotic optimality was established for large-scale systems with underlying Markov Decision Process (MDP) dynamics. The framework was further extended in \cite{Borkar_avrachenkov} to settings with unknown transition kernels, preserving asymptotic optimality via learning-based methods. Our setting differs fundamentally in that we consider restless bandits under an SMDP formulation induced by non-preemptive service with random completion times, rather than discrete-time MDP dynamics. While we adopt a similar Lagrange relaxation principle, we do not establish asymptotic optimality for the resulting policy under SMDP and instead use it as a computationally tractable heuristic for the heterogeneous AoI problem.

None of the above works considers RMAB formulation and index-based policies for weakly coupled SMDPs. 
In the context of AoI minimization for heterogeneous sources and unreliable channels, existing work is limited to state-independent stationary policies that are understandably suboptimal. 
 

\subsection{Our Contributions}
Following are our main contributions.
\begin{enumerate}
    \item We model the problem of scheduling of updates from heterogeneous sources to minimize  the long run weighted average age of updates at the receiver~(Section~\ref{systemmodel}). We pose it as a  average cost SMDP problem (Section~\ref{subsec:Sysmod_smdp_form}).  
    \item We propose an approach to frame general weakly coupled SMDPs as RMABs. Using this formulation, we develop a Lagrange index based heuristic for weakly coupled SMDPs~(Section~\ref{Sec:RMABSMDP}). \remove{This development extends the applicability of RMABs and index policies beyond weakly-coupled MDPs and  is of interest in its own right.}
    \item We apply the above heuristic to the update scheduling problem. In particular, we  argue that the policies suggested for the decoupled single source problems are {\it one step lookahead policies}  \cite[Section~4.4]{Bertsekas} and provide an iterative algorithm to obtain these. We use  solutions to the decoupled problems to obtain Lagrange indices, and subsequently, a heuristic for the original problem (Section~\ref{Sec:LagrangeIndexpolicy}). Our numerical results show that the proposed heuristic noticeably outperforms the known solutions (Section~\ref{sec:numerical_results}).
\end{enumerate}

\remove{
We address the fundamental scheduling question: \textit{given the current Ages of Information of all flows, which flow should be scheduled to minimize a prescribed AoI cost?} We adopt the generate-at-will model considered in \cite{zhao}., where fresh updates are always available when a flow is selected. Since directly solving the constrained multi-flow problem is intractable due to the hard constraint that only one flow can be served at a time, we relax this instantaneous constraint to an average constraint and formulate the corresponding Lagrangian. The relaxed problem decouples into single-flow SMDPs, each characterizing the cost-to-go of a tagged flow under different scheduling actions. We solve the resulting primal–dual problem and propose a Lagrangian index policy that computes an index for each flow based on the current state vector and schedules the flow with the minimum index. Although optimality is not established in full generality, numerical results show that the proposed policy consistently outperforms existing approaches in heterogeneous, unreliable, non-preemptive AoI systems.
}

\section{System Model} \label{systemmodel}
We consider a discrete-time sensing and communication system with $N$ sources~(e.g., sensors)  indexed by $1,\cdots,N$. Each source generates updates that have to be communicated to a common receiver through an unreliable wireless channel. We consider scheduling of update transmissions so as to minimize the average age of updates at the receiver. We now formally describe the update generation and communication processes, age evolution at the receiver, and the optimal scheduling problem. 

\subsubsection*{Update generation}
Source $i$'s updates are $L_i$ packet long. A source can take multiple slots to transmit an update. When a source is scheduled to transmit, it generates a new update unless it has an unfinished update transmission. 

\subsubsection*{Update transmission} The communication channel allows only one source to transmit at a time. When scheduled, a source transmits one packet per slot. Owing to channel unreliability, each packet transmission succeeds with probability $p \in (0,1]$ and fails with probability $1-p$.
Clearly, source $i$ requires at least $L_i$ slots to transmit an update. We consider non-preemptive update transmissions; a scheduled source keeps the channel until it successfully transmits all the packet pertaining to its update.   

Let $\tau^{(i)}_k$ denote the time when $k$th successful update delivery of source $i$ is accomplished. Moreover, let $X^{(i)}_k$ denote the number of slots needed for delivery of this update. From the above discussion, the generation time of this packet is $\tau^{(i)}_k - X^{(i)}_k$. Note that because of the packet generation model (generate at will) assumed, if $L_i=1$ then $X^{(i)}_k = 1$ for all $k$ and if $L_i \ge 2$ then  $X^{(i)}_k, k \geq 1$ are i.i.d. random variables. In particular,
\[
\mathbb{P}(X^{(i)}_k = l) = p^{(i)}_l
\]
where, if $L_i = 1$ then $p^{(i)}_1 = 1$, and if $L_i \ge 2$ then
\begin{equation}
\label{eq:first_success_duration_distribution}
p^{(i)}_l =
\begin{cases}
\binom{l-2}{L_i-2} p^{L_i-1} (1-p)^{l-L_i}
& \text{ if } l \ge L_i,\\
0 & \text{ otherwise.}
\end{cases}
\end{equation}
The above equation follows from the observation that after successful transmission of the first packet, the remaining $L_i -1$ successful packet transmissions require a
negative binomial number of additional slots.
It follows that $\mathbb{E}[X^{(i)}_k] = \frac{L_i - (1-p)}{p}$.

\subsubsection*{Age of updates} For any source, the age of updates, also referred to as the age of information~(AoI), at the receiver is defined as the time elapsed since generation of the last received update. Let $\bar{v}_i(t)$ denote the age of updates of source $i$ at the receiver at time $t$. Clearly,
\[
\bar{v}_i(t) = \begin{cases}
    X^{(i)}_k & \text{ if } t = \tau^{(i)}_k, \\
    \bar{v}_i(t-1) + 1 & \text{ otherwise.}
\end{cases}
\]
The expected long term weighted average cost is given as 
\[
\lim_{T \to \infty}\frac{1}{T}\mathbb{E}\left[\sum_{t=1}^T \alpha_i \bar{v}_i(t)\right].
\]
\paragraph*{Optimal Scheduling problem}
Update transmissions of various sources must be scheduled so as to minimize the above cost. Notice that a source must be selected for update transmission after either of the following events.
\begin{enumerate}
    \item An update is successfully delivered. 
    \item Transmission of the first packet of an update fails.
\end{enumerate}
Clearly, the intervals between successive decision instants are random variables. We frame the above scheduling problem as a semi-Markov decision process~(SMDP) as described below.

\subsection{SMDP Formulation}
\label{subsec:Sysmod_smdp_form}
Let $t_k$, $k\geq 1$ denote the successive decision instants. Let $v_i(k)$ denote the age of updates of source $i$ at the receiver at $t_k$; $v(k) \coloneqq (v_1(k),  \cdots, v_N(k)) \in \mathbb{Z}_+^N$ represents the state at $t_k$. Further, let  $a(k) \coloneqq (a_1(k), \cdots, a_N(k))\in \{0,1\}^N$ denote the scheduling decision at $t_k$; $a_i(k) =1$ if source $i$ is chosen for update transmission at $t_k$ and $a_i(k) =0$ otherwise. The scheduling constraint prescribes that exactly one source  be chosen at each $t_k$, i.e., $\sum_{i=1}^N a_i(k) = 1 \qquad \forall k \ge 1$.
Let $\Delta(k) \coloneqq t_{k+1} - t_k, k \geq 1$ denote the gaps between successive decision instants.
Given action $a(k)$ with $a_i(k)=1$, $\Delta(k) = 1$ if the first packet's transmission at $t_k$ fails and $\Delta(k) \ge L_i$ otherwise. More specifically, $\Delta(k)$ is distributed as follows.  
\begin{equation}
\mathbb{P}(\Delta(k) = l|a_i(k)=1) = pp^{(i)}_l + (1-p)\mathds{1}(l=1).     
\end{equation}
It can be easily checked that $\mathbb{E}[\Delta(k)|a_i(k)=1] = L_i$. More generally, the expected value of the interval $\Delta(k)$, denoted as $S(a(k))$, is given as follows.
\[S(a(k)) \triangleq \mathbb{E}[\Delta(k)] = \sum_{j=1}^N a_j(k)L_j. \]

Further, given $v(k)$ and $a(k)$ with $a_i(k)=1$, 
$v_i(k+1) = v_i(k)+1$ if the first packet's transmission at $t_k$ fails and $v_i(k+1) = \Delta(k)$ otherwise. In either case,  $v_j(k+1) = v_j(k)+\Delta(k)$ for all $j \neq i$. See Figure~\ref{fig:smdp_transition_twoflows} for an illustration. We thus see that $v_j(k+1) = v_j(k) + 1$ for all $j$ with probability $1-p$, and $v_i(k+1) = l$ and $v_j(k+1) = v_j(k)+l$ for all $j \neq i$ with probability $pp^{(i)}_l$. Clearly, $\bar{v}_i(t), t \geq 1$ is a controlled semi-Markov chain~(or SMDP). We refer to $t_k$ as the $k$th stage of this SMDP.   

Now, we describe the single stage cost associated with the above SMDP. The weighted update-age cost incurred over $\{t_k,\cdots,t_{k+1}-1\}$, also referred to as the $k$th stage cost, is a function of the state $v(k)$, action $a(k)$ and the interval length $\Delta(k)$. 
Note that the ages of updates for all the sources increment by one at successive slots between $t_k$ and $t_{k+1}$. Hence the $k$th stage cost associated with source $i$ equals 
\[
\alpha_i\sum_{s=0}^{\tau(k)-1} (v_i(k)+s) 
 = \alpha_i\left(\Delta(k) v_i(k) + \frac{\Delta(k)(\Delta(k)-1)}{2}\right).\]
Consequently, the expected $k$th stage cost associated with source $i$, denoted as $g_i(v_i(k),a(k))$, is given by
\begin{equation}
g_i(v_i(k),a(k))
= 
\alpha_i \sum_{j=1}^N a_j(k)(
v_i(k)  L_j + w(L_j)).
\label{eq:expected_cost_weighted_aoi}
\end{equation}
where
\begin{equation}
w(L_j)\coloneqq \mathbb{E}\left[\frac{\Delta(k)(\Delta(k)-1)}{2} l\bigg \vert a_j(k)=1\right] = \frac{L_j(L_j-1)}{2p}.
\label{eq:expected_length_weighted_aoi}
\end{equation}

\begin{figure}[htbp]
\centering
\setlength{\tabcolsep}{2pt}

\begin{tabular}{cc}
\begin{tikzpicture}[scale=0.62, >=Latex,
    axis/.style={thick, ->},
    flowi/.style={thick, red},
    flowj/.style={thick, blue},
    every node/.style={font=\scriptsize}
]
\draw[axis] (0,0) -- (3,0);
\draw[axis] (0,0) -- (0,5);
\node[above] at (0,5) {AoI};
\node[right] at (3,0) {Time};

\foreach \y in {1,...,4}
    \draw[dashed, very thin, gray!30] (0,\y) -- (3,\y);
\foreach \x in {1,...,2}
    \draw[dashed, very thin, gray!30] (\x,0) -- (\x,4);

\node[below] at (0,0) {$t_k$};
\node[below] at (1,0) {$t_{k+1}$};

\draw[flowi] (0,3) -- (1,3);
\draw[flowi] (1,3) -- (1,4);
\node[left, red] at (0,3) {$v_j(k)$};
\node[right, red] at (1,4) {$v_i(k)+1$};

\draw[flowj] (0,1) -- (1,1);
\draw[flowj] (1,1) -- (1,2);
\node[left, blue] at (0,1) {$v_i(k)$};
\node[right, blue] at (1,2) {$v_i(k)+1$};
\end{tikzpicture}
&
\begin{tikzpicture}[scale=0.62, >=Latex,
    axis/.style={thick, ->},
    flowi/.style={thick, red},
    flowj/.style={thick, blue},
    every node/.style={font=\scriptsize}
]
\draw[axis] (0,0) -- (4,0);
\draw[axis] (0,0) -- (0,6);
\node[above] at (0,6) {AoI};
\node[right] at (4,0) {Time};

\foreach \y in {1,...,5}
    \draw[dashed, very thin, gray!30] (0,\y) -- (4,\y);
\foreach \x in {1,...,3}
    \draw[dashed, very thin, gray!30] (\x,0) -- (\x,5);

\node[below] at (0,0) {$t_k$};
\node[below] at (3,0) {$t_{k+1}=t_k+l$};

\draw[flowi] (0,3) -- (1,3);
\draw[flowi] (1,3) -- (1,4);
\draw[flowi] (1,4) -- (2,4);
\draw[flowi] (2,4) -- (2,5);
\draw[flowi] (2,5) -- (3,5);
\node[left, red] at (0,3) {$v_i(k)$};
\node[right, red] at (3,5) {$v_j(k)+3$};

\draw[flowj] (0,1) -- (1,1);
\draw[flowj] (1,1) -- (1,2);
\draw[flowj] (1,2) -- (2,2);
\draw[flowj] (2,2) -- (2,3);
\draw[flowj] (2,3) -- (3,3);
\draw[flowj] (3,3) -- (3,1.5);
\node[left, blue] at (0,1) {$v_i(k)$};
\node[right, blue] at (3,1.5) {$3$};

\draw[dashed] (0,0.4) -- (3,0.4);
\node at (2.0,0.65) {\scriptsize transmission};
\end{tikzpicture}
\\
{\scriptsize (a) $\Delta(k)=1$ (failure)}
&
{\scriptsize (b) $\Delta(k) = 3$ (success)}
\end{tabular}

\caption{State transition from $t_k$ to $t_{k+1}$ given that source $i$ is scheduled at $t_k$. We assume $L_i = 2$.}
\label{fig:smdp_transition_twoflows}
\end{figure}
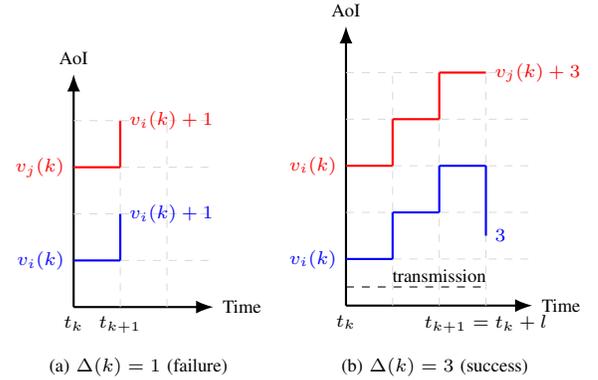

We can formulate the scheduling problem as a long-term average cost control problem. A stationary admissible policy is a mapping from the state space $\mathbb{Z}_+^N$ to the set of $N$-dimensional standard unit vectors. So, given the initial state, any policy $\pi$ induces a set of actions $a(k) \in \{0,1\}^N, k \geq 1$ such that 
\begin{equation}
\sum_{j=1}^N a_j(k)=1 \quad \forall k \geq 1.
\label{eq:single_arm_constraint}
\end{equation}
Using Markov renewal reward theorem~\cite[Theorem~7.5]{ross_applied}, the long-term expected average-cost associated with policy $\pi$ is given by
\begin{equation}
\lim_{K\to\infty}
\frac{
\mathbb{E}_\pi\left[
\sum_{k=0}^K
\sum_{i=1}^N
g_i(v_i(k),a(k))
\right]
}{
\mathbb{E}_\pi\left[
\sum_{k=0}^K
\sum_{i=1}^N a_i(k)L_i
\right]
}
\label{eq:primal_problem}
\end{equation}
The optimal control problem aims to minimize this cost over all the admissible policies. Evidently, this problem suffers from  curse of dimensionality and is non-viable even for moderate number of sources. 

We address this issue via posing the scheduling problem as a RMAB problem with the sources being treated as the arms, and proposing a Lagrange index based heuristic.  
We can see that the update age evolution of different sources are weakly coupled through the actions at the decision instants as in RMAB problems. 
However, unlike a classical RMAB setting where states constitute a controlled Markov chain, state evolution in our case is a controlled semi-Markov chain. Here, the decoupled problems associated with different arms are also SMDPs.    
To the best of our
knowledge, RMAB formulation of SMDPs has not been
considered so far. In the next section, we discuss how the RMAB framework can be extended for general weakly-coupled SMDPs and also propose a Lagrange index based heuristic.




\section{RMAB Formulation of SMDPs}
\label{Sec:RMABSMDP}

In this section, we consider general weakly-coupled SMDPs, treat them as RMABs and develop a Lagrange index based heuristic.  We retain most of the notation in Section~\ref{subsec:Sysmod_smdp_form}. In particular, we consider $N$ arms with $v_i(k) \in \mathcal{V}_i$ being the state of arm $i$ and $a_i(k) \in \{0,1\}$ being the action associated with arm $i$ at the $k$th stage. As before, the actions are coupled as $\sum_{i=1}^N a_i(k) = 1$ for all $k \geq 1$ and. $v(k)$ and $a(k)$ represent state and action vectors at the $k$th stage. Given $a(k),  \geq 1$, the states of different arms, i.e., $v_i(k), k\geq 1$ evolve independently. In particular, for all $k\geq 1$, given $v_i(k)$ and $a(k)$, $v_i(k+1)$ is independent of $v_j(k),  j \neq i$ and previous state and action vectors. For any $i$, $L_i$ represents the sojourn time of the joint SMDP in the $k$th stage given $a_i(k) = 1$. Finally, $g_i(v_i,a)$ represents the single stage cost associated with arm $i$ given that it is in state $i$ and action $a$ is taken.  The total single state cost for state-action pair $(v,a)$ is given by $\sum_{i=1}^N g_i(v_i,a)$. We do not specify the state transition probabilities as these are not needed for  the following discussion.

An admissible policy $\pi:\prod_{i=1}^N \mathcal{V}_i \to \{0,1\}^N$ induces a sequence of actions $a(k), k\geq 1$ that satisfy~\eqref{eq:single_arm_constraint}. Let $\Pi$ be the set of all admissible policies. We aim to minimize the long-term average expected cost, given by~\eqref{eq:primal_problem}, over all $\pi \in \Pi$. Constraint~\eqref{eq:single_arm_constraint} which couples the individual SMDPs necessitates that these be solved together rendering an optimal solution intractable. Below, we propose a novel relaxation of~\eqref{eq:single_arm_constraint} that facilitates decoupling of individual SMDPs.

\remove{
During each decision stage, a cumulative cost is incurred. The expected cost
contributed by arm $i$ is denoted by $g_i(v_i(k),a(k))$, where the cost
depends on the local state of arm $i$ and on the selected action. The
dependence on the full action vector reflects the fact that the chosen arm
determines both the transition kernel and the holding-time distribution, which
in turn affect the accumulated cost.
Under this general SMDP bandit model, the objective is to minimize the long-term average cost per unit time, subject to the constraint that at most
one arm is activated at each decision stage. The resulting optimization problem is similar to \eqref{eq:primal_problem}.
}

\subsection{A Relaxed Problem}
We adopt the standard relaxation used in classical RMABs~\cite{Verloop_lp,Gast_2024,Borkar_avrachenkov}. Let $N(T)$ denote the number of decision stages up to time $T$. Constraint~\eqref{eq:single_arm_constraint} which ensures that exactly one arm be played at each stage also implies that the long term expected fractions of slots during which different arms  are played add up to one. More formally,~\eqref{eq:single_arm_constraint} implies that    
\begin{equation}
\lim_{T\to\infty}
\frac{1}{T}
\mathbb{E}_\pi
\!\left[
\sum_{k=0}^{N(T)} \sum_{i=1}^N a_i(k)S(a(k))
\right]
=1.
\label{eq:relaxed_constraint_time}
\end{equation}
Note that the expected sojourn times $\mathbb{E}_\pi\left[\sum_{i=1}^N a_i(k)L_i \right]$ are bounded by $\max_i L_i$.  Hence,
by Markov renewal reward theorem~\cite{Bertsekas,ross_applied},~\eqref{eq:relaxed_constraint_time} is equivalent to
\begin{equation}
\lim_{K\to\infty}
\frac{
\mathbb{E}_\pi
\left[
\sum_{k=0}^K \sum_{i=1}^N  a_i(k)S(a(k))
\right]
}{
\mathbb{E}_\pi
\left[
\sum_{k=0}^K S(a(k))
\right]
}
=1.
\label{eq:relaxed_problem_constraint}
\end{equation}
The relaxed problem therefore becomes
\begin{equation}
\min_{\pi}
\lim_{K\to\infty}
\frac{
\mathbb{E}_\pi
\left[
\sum_{k=0}^K
\sum_{i=1}^N
g_i(v_i(k),a(k))
\right]
}{
\mathbb{E}_\pi
\left[
\sum_{k=0}^K S(a(k))
\right]
}
\label{eq:relaxed_problem}
\end{equation}
over  all $\pi$ for which the the induced actions $a(k), k \geq 0$ satisfy~\eqref{eq:relaxed_problem_constraint}. 
 Constraint~\eqref{eq:relaxed_problem_constraint} allows multiple arms to be played in a slot. 
This relaxation enables a Lagrangian formulation that decomposes into $N$ independent single-arm control problems.

\subsubsection*{The Dual Problem}
Introducing a Lagrange multiplier $\lambda\in\mathbb{R}$ associated with constraint~\eqref{eq:relaxed_problem_constraint}, 
the Langrangian can be written as 
\[
\lim_{K\to\infty}
\frac{
\mathbb{E}_\pi
\left[
\sum_{k=0}^K
\sum_{i=1}^N
(g_i(v_i(k),a(k))
+\lambda a_i(k)S(a(k)))\right]
}{
\mathbb{E}_\pi
\left[
\sum_{k=0}^K S(a(k))
\right]
}
-\lambda
\]
So, the dual problem is
\begin{equation}
\max_{\lambda\in\mathbb{R}}
\left(\min_{\pi}
\sum_{i=1}^N \tilde{\mathcal{L}}_i(\pi,\lambda) -\lambda\right).
\label{eq:dual_problem}
\end{equation}
where
\begin{align}
\lefteqn{
\tilde{\mathcal{L}}_i(\pi,\lambda) \coloneqq  \nonumber } \\
&\lim_{K\to\infty}
\frac{
\mathbb{E}_\pi\Bigg[
\sum_{k=0}^K
\Big(
g_i(v_i(k),a(k))
+
\lambda\, a_i(k)\,S(a(k))
\Big)
\Bigg]}{
\mathbb{E}_\pi\left[\sum_{k=0}^K S(a(k))\right]}.
\label{eq:lagrangian_reduced}
\end{align}

\remove{
We now describe an approach to solve the relaxed problem. Subsequently, this solution is used to define Lagrange indices.
}
\remove{
we form the
Lagrangian by adding a penalty proportional to the constraint violation. Since the resulting term $-\lambda \tilde{S}(a(k))$ appears in the numerator with the same
normalization $\mathbb{E}_\pi[\sum_{k=0}^K \tilde{S}(a(k))]$, it contributes the constant $-\lambda$ to the average-cost objective. Therefore, for a fixed $\lambda$, minimizing the Lagrangian over $\pi$ is equivalent to minimizing the following modified average-cost ratio:
\begin{equation}
\begin{aligned}
\tilde{\mathcal{L}}(\pi,\lambda)
\triangleq\;
\lim_{K\to\infty}
\left(
\mathbb{E}_\pi\!\left[\sum_{k=0}^K \tilde{S}(a(k))\right]
\right)^{-1}
\\
\;
\sum_{i=1}^N
\mathbb{E}_\pi\!\Bigg[
\sum_{k=0}^K
\Big(
g_i(v_i(k),a(k))
+
\lambda\, a_i(k)\,\tilde{S}(a(k))
\Big)
\Bigg].
\end{aligned}
\label{eq:lagrangian_reduced}
\end{equation}

The corresponding dual problem is
\begin{equation}
\max_{\lambda\in\mathbb{R}}
\;\;
\min_{\pi\in\Pi}
\tilde{\mathcal{L}}(\pi,\lambda).
\label{eq:dual_problem}
\end{equation}
}

We note that \eqref{eq:dual_problem} does not decouple. However, a lower bound to \eqref{eq:dual_problem} can be obtained as follows
\begin{equation}
\max_{\lambda\in\mathbb{R}}\left(\sum_{i=1}^N
\min_{\pi}
\tilde{\mathcal{L}}_i(\pi,\lambda) -\lambda\right).
\label{eq:dual_problem_lower_bound}
\end{equation}
Now we propose a method to solve \eqref{eq:dual_problem_lower_bound}.
\remove{
We solve the relaxed problem~(\eqref{eq:relaxed_problem} subject to~\eqref{eq:relaxed_problem_constraint}) using a dual update method where (a)~we minimize the Lagrangian for any given $\lambda$ (b)~we use this solution to iterate dual variable $\lambda$ such that the iterations converge to the optimal value $\lambda^\ast$. Solution to  Lagrangian minimization for $\lambda^\ast$ yields the optimal policy to the relaxed problem.    
}
\subsubsection{Inner Minimization Problem} For a fixed $\lambda$ The inner minimization problem  is equivalent to 
\[
\min_{\pi} \tilde{\mathcal{L}}_i(\pi,\lambda)
\]
which decouples into $N$ SMDPs associated with the $N$ arms. A policy $\pi$  for the $i$th arm's problem is a mapping from $\mathcal{V}_i$ from $\{0,1\}^N$.
\remove{For any fixed $\lambda$, the inner objective admits a separable structure across arms, which allows the minimization over $\pi$ to be carried out via $N$ single-flow subproblems.} The collection of these single-flow solutions provides the minimizing policy for the current $\lambda$.
 We now focus on the corresponding single-arm problem. In particular, for a fixed multiplier $\lambda$, we study the optimal control of an  individual arm $i$ under the modified expected per-stage cost
$g_i(v_i(k),a(k))+\lambda a_i(k)\tilde{S}(a(k))$. 

\remove{
In the single-arm problem, the per-stage cost depends explicitly on the chosen
action. This is because different actions induce different holding-time distributions between decision stages, which in turn change both the expected
stage duration and the cumulative cost incurred during that stage. Therefore,
the resulting control problem is a semi-Markov decision process (SMDP) with
action-dependent costs and transitions.
}
The single-arm average-cost optimality
equation can be written as
\begin{equation}
h_i(v_i)
=
\min\Big\{
Q_{i,i}(v_i),\;
\min_{j\neq i} Q_{i,j}(v_i)
\Big\},
\label{eq:single_flow_bellman_general}
\end{equation}
where
\begin{align*}
\lefteqn{Q_{i,k}(v_i)} \\
& =
\begin{cases}
g_i(v_i,\mathbf{e}_i)
+(\lambda-\theta(\lambda))S(\mathbf{e}_i)
+\mathbb{E}[h_i(V_i')\mid v_i,\mathbf{e}_i],
k=i,
\\
g_i(v_i,\mathbf{e}_k)
-\theta(\lambda) S(\mathbf{e}_k)
+\mathbb{E}[h_i(V_i')\mid v_i,\mathbf{e}_k],
k\neq i.
\end{cases}
\label{eq:Qi_compact}
\end{align*}
Here, $\mathbf e_k$ denotes the unit scheduling vector that selects source $k$. In \eqref{eq:single_flow_bellman_general}, $h_i(v)$ is the bias (relative
value) function and $\theta(\lambda)$ is the optimal average cost for the single-flow problem. The Bellman equation contains $N$ actions corresponding to activating each arm. Solving \eqref{eq:single_flow_bellman_general} for a fixed $\lambda$
yields the optimal policy for source $i$ under the Lagrangian relaxation. The assumptions on $g_i(v, \mathbf{e}_i)$ such that the solution to \eqref{eq:single_flow_bellman_general} exists is as per \cite{Puterman_book, sennott1998stochastic}.

\subsubsection{Dual Update} 
Note that the dual problem is a concave maximization problem.
The resulting dual function is then maximized over $\lambda$ using standard
scalar ascent methods (e.g., subgradient ascent), yielding an optimal multiplier
$\lambda^\star$, which is then used to update the dual variable. Iterating this procedure produces $\lambda^\star$ and a corresponding policy. To update the dual variable, we compute the average fraction of time the $i$-th
arm is activated under this optimal policy. This is obtained by solving an
auxiliary average-cost Bellman equation where the per-stage cost equals $1$
when arm $i$ is activated and $0$ otherwise. Let $\mu_i(\lambda)$ denote the resulting long-run activation fraction of arm $i$ under the optimal policy for multiplier $\lambda$. The dual variable is then updated using bisection method.


\subsection{Lagrange Indices}
Upon convergence to $\lambda^\star$, we compute, for each arm $i$, the Lagrange index
\begin{equation}
\gamma_i(v_i)
=
Q_{i,i}(v_i)
-
\min_{j\neq i} Q_{i,j}(v_i).
\end{equation}

\subsubsection*{Heuristic for the Original Problem}

The resulting Lagrange index policy selects the  arm
\begin{equation}
m^*
=
\argmin_{i} \gamma_i(v_i).
\label{eq:lag_index_arm}
\end{equation}

This constitutes the Lagrange index policy for restless multi-armed bandits for SMDP. In Section~\ref{sec:singleflowprob} we discuss about the AoI minimization problem stated in 
\eqref{eq:primal_problem}.

\section{Single source AoI minimization} 
\label{sec:singleflowprob}

We now return to the weighted AoI minimization problem introduced in
Section~\ref{systemmodel}. Under Lagrangian relaxation, the original
multi-source problem decouples into independent single-source subproblems as per Section~\ref{Sec:RMABSMDP}.
We focus on one such subproblem corresponding to a fixed source index $i$.
\subsection{Average-Cost Bellman Equation (SMDP)}
Fix a multiplier $\lambda \in \mathbb{R}$. For the weighted AoI cost, the expected cumulative cost incurred over a decision stage when scheduling source $j$ admits the closed-form expressions given in
\eqref{eq:expected_length_weighted_aoi}--\eqref{eq:expected_cost_weighted_aoi}.
The corresponding average-cost optimality equation is given by
\cite{Puterman_book}:
\begin{equation}
\begin{aligned}
h_i(v_i)
=
(1-p)\,h_i(v_i+1)
+
\min\Big\{
Q_{i,i}(v_i),\;
\min_{j\neq i} Q_{i,j}(v_i)
\Big\}.
\end{aligned}
\label{eq:smdp_single_flow_bellman_Q}
\end{equation}

The term $(1-p)h_i(v_i+1)$ represents the continuation value under transmission failure, which occurs with probability $1-p$ regardless of the chosen action. In that case, the AoI deterministically increases by
one. Since this transition is action-independent, the corresponding term
appears outside the minimization.
The action-dependent terms are
\begin{equation*}
Q_{i,j}(v_i) \triangleq
\begin{cases}
g_i(v_i,\mathbf{e}_i)
+ (\lambda-\theta(\lambda))L_i
+ \displaystyle\sum_{l\ge L_i} pp_l^{(i)} h_i(l),
 j=i, \\[0.8em]
g_i(v_i,\mathbf{e}_j)
- \theta(\lambda) L_j
+ \displaystyle\sum_{l\ge L_j} pp_l^{(j)} h_i(v_i+l),
 j\neq i.
\end{cases}
\label{eq:Q_definitions_cases}
\end{equation*}
\remove{Upon successful transmission, the duration of the decision epoch is a random variable taking values $l \ge L_j$ with probability mass function $\{pp_l^{(j)}\}$
If $j=i$, the AoI resets to $l$ after success.If $j\neq i$, the AoI evolves to $v_i + l$.} Here, $\mathbf e_k$ denotes the unit scheduling vector that selects
source $k$. The scalar $\theta(\lambda)$ denotes the optimal average cost of the
single-source problem for a fixed multiplier $\lambda$. The multiplier $\lambda$ can be interpreted as a penalty you pay to associated with scheduling source $i$.

We establish a few structural properties of the $h_i(.)$ function. In Lemma~\ref{lem:h_monotone_full}, we show the monotonicity property of the Bellman equation. In Lemma~\ref{lem:dominant_competitor} we reduce the action space from $N$ to $2$ actions. Theorem~\ref{thm:threshold_structure} shows that the 2 action Bellman equation has a threshold structure using \emph{one step lookahead policies}.
\begin{lemma}[Monotonicity of the value function]
\label{lem:h_monotone_full}
The one-step cost $g_i(v,\mathbf a)$ is nondecreasing in $v$
for every feasible action $\mathbf a$.
Consequently, any solution $h_i(\cdot)$ of
\eqref{eq:smdp_single_flow_bellman_Q}
is nondecreasing in $v$, i.e.,
\[
v_1 \ge v_2
\quad \Longrightarrow \quad
h_i(v_1) \ge h_i(v_2).
\]
\end{lemma}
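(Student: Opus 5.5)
The first claim is read off \eqref{eq:expected_cost_weighted_aoi}. For a unit action $\mathbf e_j$ we have $g_i(v,\mathbf e_j)=\alpha_i(vL_j+w(L_j))$. Since $\alpha_i\ge 0$ and $L_j\ge 1$, this is affine and nondecreasing in $v$.

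For the second claim I will use relative value iteration on \eqref{eq:smdp_single_flow_bellman_Q}. Define the operator
\[
(Th)(v)=(1-p)h(v+1)+\min\Big\{Q^h_{i,i}(v),\ \min_{j\neq i}Q^h_{i,j}(v)\Big\},
\]
where $Q^h_{i,j}$ denotes the expressions in \eqref{eq:Q_definitions_cases} with $h$ substituted for $h_i$. Start from $h^{(0)}\equiv 0$ and set $h^{(n+1)}=Th^{(n)}-Th^{(n)}(v_0)$ for a fixed reference state $v_0$. The induction hypothesis is that $h^{(n)}$ is nondecreasing in $v$. Under it, each term of $Th^{(n)}$ is nondecreasing in $v$:
\begin{itemize}
\item the failure term $(1-p)h^{(n)}(v+1)$ is nondecreasing;
\item in $Q_{i,i}(v)$, the continuation $\sum_{l\ge L_i}pp^{(i)}_l h^{(n)}(l)$ does not depend on $v$ (the age resets), and $(\lambda-\theta)L_i$ is constant, so $Q_{i,i}$ inherits monotonicity from $g_i(v,\mathbf e_i)$;
\item for $j\neq i$, $Q_{i,j}(v)$ is nondecreasing, since $g_i(v,\mathbf e_j)$ is and $\sum_{l}pp^{(j)}_l h^{(n)}(v+l)$ is a nonnegative combination of nondecreasing functions.
\end{itemize}
A pointwise minimum of nondecreasing functions is nondecreasing, and subtracting the constant $Th^{(n)}(v_0)$ preserves this. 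Hence $h^{(n+1)}$ is nondecreasing. Adding the constant $\theta$ terms consistently does not affect the argument.

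Finally, I pass to the limit. I will invoke the standard average-cost SMDP results cited before the lemma (\cite{Puterman_book,sennott1998stochastic}). Under those assumptions, after the usual data transformation to a discrete-time MDP (dividing by the sojourn times $L_j$), the relative value iterates converge pointwise to a solution $h_i$ of \eqref{eq:smdp_single_flow_bellman_Q}. Pointwise limits of nondecreasing functions are nondecreasing, which gives $h_i(v_1)\ge h_i(v_2)$ whenever $v_1\ge v_2$.

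An alternative route avoids iterating the average-cost equation directly. Take the $\beta$-discounted value functions $h_\beta$, prove monotonicity by the same induction (where value iteration converges by contraction), and then let $\beta\uparrow1$ on $h_\beta(v)-h_\beta(v_0)$ via Sennott's vanishing-discount conditions.

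The main obstacle is the limiting step. The state space is countable and the costs are unbounded and linear in $v$, so convergence of relative value iteration, or the vanishing-discount limit, is not automatic. It also yields monotonicity only for the solution obtained as that limit. The phrase ``any solution'' requires uniqueness of $h_i$ up to an additive constant within the relevant class. I would appeal to the cited Sennott conditions for that uniqueness rather than prove it.
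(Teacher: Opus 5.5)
Your proposal is sound and follows the paper's proof essentially step for step. Both run relative value iteration from $h^{(0)}\equiv 0$ and show by induction that every iterate is nondecreasing (the reset action has a $v$-independent continuation, the competing actions shift the argument to $v+l$, and a pointwise minimum preserves order), then pass to the pointwise limit via Sennott's conditions; the paper only adds a Foster--Lyapunov drift check on the always-schedule-$i$ policy to justify that appeal, and, like you, it does not actually prove the uniqueness needed for ``any solution.'' One small repair: since you obtain convergence through the data transformation, run the induction on the transformed operator rather than on $T$. It goes through unchanged, because the added self-loop term $(1-\tau_0/L_j)\,h(v)$ has nonnegative weight for $0<\tau_0\le\min_j L_j$ and $g_i(v,\mathbf e_j)/L_j$ is still nondecreasing.
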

\begin{proof}
    See Appendix~\ref{app:bias_monotone}
\end{proof}
We now show that the inner minimization over $j\neq i$ in \eqref{eq:smdp_single_flow_bellman_Q} reduces to a single
dominant competing action.

\begin{lemma}[Dominant competing action]
\label{lem:dominant_competitor}
Fix a source $i$ and multiplier $\lambda$, and define
\[
m(i) = \arg\min_{j \neq i} L_j.
\]
Then, for all $v_i \in \mathbb{Z}_+$ and all $j \neq i$,
\[
Q_{i,m(i)}(v_i)
\le
Q_{i,j}(v_i).
\]
That is, among all competing sources, it is optimal to select the one with the smallest update length.
\end{lemma}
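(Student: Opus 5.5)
The plan is to compare $Q_{i,j}(v_i)$ with $Q_{i,m(i)}(v_i)$ through a coupling of the stage durations, and to reduce everything to one inequality about increments of the bias function $h_i$.

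\textbf{Step 1 (coupling).} By \eqref{eq:first_success_duration_distribution}, after the first successful packet the remaining $L_j-1$ successes take a negative binomial number of slots. This is a sum of $L_j-1$ i.i.d.\ geometric$(p)$ variables. Hence if $L_j\ge L_{m(i)}$ we can write
\[
X^{(j)} \stackrel{d}{=} X^{(m(i))}+Y,
\]
where $Y\ge 0$ is a sum of $L_j-L_{m(i)}$ independent geometric variables, independent of $X^{(m(i))}$, and $\mathbb{E}[Y]=(L_j-L_{m(i)})/p$.

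\textbf{Step 2 (express the difference as an expectation over the extra slots).} Rewrite both $Q$'s as expectations over the success branch, i.e.\ the factor $p$ times an expectation over $X$. Use that $v_iL_j+w(L_j)$ and $L_j$ are exactly the expected accumulated age and expected duration of the stage. Then
\[
Q_{i,j}(v_i)-Q_{i,m(i)}(v_i)
= p\,\mathbb{E}\Big[\alpha_i\sum_{s=0}^{Y-1}(x+s)-\theta(\lambda)Y+h_i(x+Y)-h_i(x)\Big],
\]
with $x=v_i+X^{(m(i))}$. I will verify this identity by direct computation from \eqref{eq:expected_cost_weighted_aoi}--\eqref{eq:expected_length_weighted_aoi}, being careful with the quadratic term $\Delta(\Delta-1)/2$ and the failure branch. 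The cross term $\alpha_i X^{(m)}Y$ is what makes the starting age of the extra block equal to $x$ rather than $v_i$. The identity is routine bookkeeping.

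\textbf{Step 3 (key inequality).} It then suffices to show that for every $x$ and every $d\ge 0$,
\[
h_i(x)\le \alpha_i\sum_{s=0}^{d-1}(x+s)-\theta(\lambda)d+h_i(x+d). \qquad (\ast)
\]
Interpretation: from age $x$, idling for $d$ slots, paying the age cost and being charged $\theta$ per unit time, and then continuing optimally, cannot beat the relative value $h_i(x)$.

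This is the main obstacle. The difficulty is that ``pure delay by $d$ slots'' is not directly an admissible action in the single-source SMDP, since passive actions have random durations.

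I would first prove $(\ast)$ at the level of relative value iteration. Define $h^{n}$ by the finite-horizon recursion. By induction on $n$, $h^{n}(x+d)-h^{n}(x)$ dominates $\theta_n d$ minus the accumulated age cost. The induction step uses Lemma~\ref{lem:h_monotone_full} and the fact that the passive transition shifts age by the same random amount for any starting point, so differences propagate. Then pass to the limit under the existence assumptions of \cite{Puterman_book, sennott1998stochastic}.

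If that induction does not close, the fallback is a sample-path argument. For $\theta(\lambda)$ and the mean increments, use that $\theta(\lambda)$ is at most the per-slot cost rate accumulated while passive at age $\ge x$ whenever it is optimal to remain passive. Otherwise invoke the Bellman inequality for the action $\mathbf e_{i}$ and Lemma~\ref{lem:h_monotone_full}.

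\textbf{Step 4 (conclude).} Apply $(\ast)$ pathwise with $d=Y$ and take expectations. This gives $Q_{i,j}(v_i)\ge Q_{i,m(i)}(v_i)$ for all $j\neq i$ and all $v_i$.
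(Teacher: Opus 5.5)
\textbf{Steps 1 and 2 are sound.} With a common $p$ the coupling $X^{(j)}=X^{(m)}+Y$ is valid. The failure branch contributes $(1-p)(\alpha_i v_i-\theta(\lambda))$ to both $Q_{i,j}(v_i)$ and $Q_{i,m}(v_i)$ and cancels, so your formula for the difference is an exact identity.

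\textbf{The gap is Step 3: $(\ast)$ is false in general.} Because Step 2 is an identity, this is not a weakness of your method; the lemma itself fails. Take $p=1$ (allowed, since $p\in(0,1]$), $\alpha_i=1$, $L_i=1$, two competitors with $L_m=2$ and $L_j=3$, and $\lambda=8$.

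Everything is now deterministic. A cycle that starts at age $1$, lets the age reach $A$ and then serves $i$ costs $A(A+1)/2+\lambda$ over $A$ slots. The reachable values are $A\in\{1,3,4,5,\dots\}$, and $(A+1)/2+8/A$ is minimized at $A=4$, so $\theta(\lambda)=4.5$. Here $Q_{i,i}(v)=v+3.5+h_i(1)$, $Q_{i,m}(v)=2v-8+h_i(v+2)$ and $Q_{i,j}(v)=3v-10.5+h_i(v+3)$. One checks directly that $h_i(1)=0$, $h_i(2)=3.5$, $h_i(v)=v+3.5$ for $v\ge 3$ solves \eqref{eq:smdp_single_flow_bellman_Q}. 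Then
\[
Q_{i,m}(1)=-6+h_i(3)=0.5>0=-7.5+h_i(4)=Q_{i,j}(1).
\]
Correspondingly, $(\ast)$ fails at $x=3$, $d=1$, since $h_i(3)=6.5>3-4.5+h_i(4)=6$.

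The mechanism is this. The optimal wait before serving $i$ is three slots, and only the longer passive action delivers it. At ages below $\theta(\lambda)/\alpha_i$ the per-slot relative cost $\alpha_i x-\theta(\lambda)$ is negative, so the $-\theta(\lambda)L_j$ term rewards longer passive stages. When no one-slot delay is available, $h_i$ can therefore grow by less than $\theta(\lambda)-\alpha_i x$ per slot. Neither your RVI induction nor the sample-path fallback can close this.

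\textbf{Where your argument does work, and how the paper differs.} Your proof goes through when $L_{m(i)}=1$. Then $\mathbf e_{m(i)}$ advances the age by exactly one slot for every $p$, so $Q_{i,m}(y)=\alpha_i y-\theta(\lambda)+p\,h_i(y+1)$. The Bellman equation then gives $h_i(y)\le \alpha_i y-\theta(\lambda)+h_i(y+1)$, and iterating this is $(\ast)$.

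The paper never meets this obstruction because it argues in a per-stage discounted problem where the $-\theta(\lambda)L_j$ term is absent. There, $w(L_j)$ increasing, $X^{(m)}\le_{\mathrm{st}}X^{(j)}$ and monotonicity of $V_i^\beta$ do give $Q^\beta_{i,m}\le Q^\beta_{i,j}$. The difficulty is hidden in the appeal to Blackwell optimality. Discounting per stage leads in the limit to the average cost per stage, not to the per-slot ratio criterion that produces the $-\theta(\lambda)L_j$ term. The example above shows that this transfer cannot hold in general. So the statement needs an additional hypothesis, such as $L_{m(i)}=1$, rather than a different proof.
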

\begin{proof}
The proof is provided in Appendix~\ref{app:dominant_competitor}.
\end{proof}
Lemma~\ref{lem:dominant_competitor} formalizes the structural property that,
whenever source $i$ is not scheduled, it is optimal to schedule the
``fastest'' competing source, i.e., one with smallest update length $L_j$.
Intuitively, this minimizes the time during which the age of flow $i$ continues to grow.

As a consequence of Lemma~\ref{lem:dominant_competitor}, 
the inner minimization over $j \neq i$ in 
\eqref{eq:smdp_single_flow_bellman_Q} 
is achieved by the single index $m$. 
Accordingly, the Bellman equation simplifies to a reduced 
two-action form.
\begin{equation}
\begin{aligned}
h_i(v_i)
=
(1-p)\,h_i(v_i+1)
+
\min\Big\{
Q_{i,i}(v_i),\;
Q_{i,m}(v_i)
\Big\},
\end{aligned}
\label{eq:two_action_bellman}
\end{equation}
where $Q_{i,i}$ and $Q_{i,m}$ are given in
\eqref{eq:smdp_single_flow_bellman_Q}.
We show that the Bellman equation \eqref{eq:two_action_bellman} admits a threshold structure: there exists a threshold $T_i(\lambda)$ such that it is optimal to schedule source $i$ when $v_i \ge T_i(\lambda)$ and the competing source $m$ otherwise.
\subsection{Computing the threshold and average cost}
In this section we suggest a numerical method to compute the optimal threshold of the single source problem.
\begin{theorem}[Threshold structure and characterization]
\label{thm:threshold_structure}
Fix a source index $i$ and multiplier $\lambda \in \mathbb{R}$. Consider the corresponding two-action SMDP Bellman equation  \eqref{eq:two_action_bellman}, where the competing action is denoted by 
$m \neq i$. Then the optimal policy is of threshold type: there exists a threshold 
$T_i(\lambda) \in \mathbb{Z}_{\ge 0}$ such that
\begin{equation}
\pi_i^\star(v_i)
=
\begin{cases}
\mathbf{e}_m, & L_i \le v_i < T_i(\lambda),\\[4pt]
\mathbf{e}_i, & v_i \ge T_i(\lambda).
\end{cases}
\label{eq:threshold_policy}
\end{equation}
$T_i(\lambda)$ is given by
\begin{equation}
T_i(\lambda)
=
\left\lceil
\frac{\theta(\lambda)}{\alpha_i}
-
\frac{L_m - 1}{2p}
-
\frac{L_i}{p}
\right\rceil.
\label{eq:threshold_formula_theta}
\end{equation}
\end{theorem}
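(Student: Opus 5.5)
The plan is to study the sign of the difference
\[
D_i(v) \coloneqq Q_{i,m}(v) - Q_{i,i}(v)
\]
on $v \ge L_i$, and to prove two things. First, $D_i$ changes sign at most once, from negative to nonnegative. Second, the crossing point is the ceiling in \eqref{eq:threshold_formula_theta}.

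\textbf{Step 1: write out $D_i$.} Substituting \eqref{eq:expected_cost_weighted_aoi}--\eqref{eq:expected_length_weighted_aoi} into the two branches of $Q$ gives
\[
D_i(v) = \alpha_i v (L_m - L_i) + \alpha_i\bigl(w(L_m)-w(L_i)\bigr) - \theta(\lambda) L_m - (\lambda-\theta(\lambda))L_i + \sum_{l\ge L_m} p p^{(m)}_l h_i(v+l) - \sum_{l \ge L_i} p p^{(i)}_l h_i(l).
\]
The last sum does not depend on $v$. Hence all $v$-dependence sits in the linear term and in $\sum_l p p^{(m)}_l h_i(v+l)$.

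\textbf{Step 2: show $D_i$ is nondecreasing.} This is the step I expect to be the main obstacle. When $L_m < L_i$ the linear term has negative slope, so Lemma~\ref{lem:h_monotone_full} alone is not enough. A quantitative lower bound on the increments of $h_i$ is needed.

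I would obtain it from the Bellman equation \eqref{eq:two_action_bellman}, using the one-step lookahead idea. On any region where action $\mathbf e_i$ is used, the equation reads
\[
h_i(v) - (1-p)h_i(v+1) = \alpha_i v L_i + c
\]
for a constant $c$. Together with monotonicity, this linear recursion forces
\[
h_i(v+1)-h_i(v) \ge \frac{\alpha_i L_i}{p}.
\]
A similar comparison argument, using that $Q_{i,m}$ also grows at least linearly in $v$, should extend this bound to all $v$.

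Plugging the bound in, $\sum_l p p^{(m)}_l h_i(v+l)$ increases by at least $p\cdot \alpha_i L_i/p = \alpha_i L_i$ per unit step in $v$. The linear term changes by $\alpha_i(L_m-L_i)$ per unit step. So $D_i(v+1) - D_i(v) \ge \alpha_i L_m > 0$. A nondecreasing $D_i$ yields the threshold policy \eqref{eq:threshold_policy} with $T_i(\lambda) = \min\{v : D_i(v)\ge 0\}$.

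\textbf{Step 3: locate the threshold.} I would use the one-step lookahead policy: from state $v$, compare "schedule $m$ now, then schedule $i$ at the next stage" against "schedule $i$ now". At states near the threshold, the continuation after the competitor's stage lies in the active region. There $h_i(v+l) - h_i(v)$ can be replaced by its affine form from the recursion in Step 2, with slope $\alpha_i L_i/p$ in its first argument.

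Averaging over $l \sim p^{(m)}$ with $\mathbb E[\Delta]=L_m$ and $\mathbb E[\Delta(\Delta-1)/2] = w(L_m)$, the $\lambda$ terms and the $v$-independent terms cancel against the $\mathbf e_i$ branch. The condition $D_i(v)\ge 0$ should then reduce to
\[
\alpha_i\left(v + \frac{L_m-1}{2p} + \frac{L_i}{p}\right) L_m \ge \theta(\lambda) L_m .
\]
This says that the expected age cost accumulated by source $i$ while waiting one competitor stage exceeds the average-cost rate $\theta(\lambda)$ times that stage's duration. Solving for the smallest integer $v$ gives \eqref{eq:threshold_formula_theta}.

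\textbf{Step 4: check the reduction is exact.} Finally I would verify that the affine approximation is exact at the crossing point, because the relevant states $v+l$ all lie above the threshold. This makes the lookahead threshold coincide with the optimal one. Here $\theta(\lambda)$ remains implicit and is computed iteratively, consistent with the paper's remark that the threshold is obtained numerically.
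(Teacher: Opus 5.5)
Your threshold formula comes from the paper's own computation: compare scheduling $i$ now with scheduling $m$ once and then $i$ forever, substitute the affine tail $h_i(x)=\frac{\alpha_iL_i}{p}x+b$, and read off a linear inequality in $v$. Your algebra is correct. On the active region the Bellman equation forces $pb=(1-p)\frac{\alpha_iL_i}{p}+K$, where $K$ is the $v$-independent part of $Q_{i,i}$. $K$ contains $\lambda L_i$ and values $h_i(l)$ with $l<T_i(\lambda)$, where $h_i$ is not affine. It therefore cancels, and
\[
D_i(v)=L_m\Bigl(\alpha_i v+\alpha_i\frac{L_m-1}{2p}+\alpha_i\frac{L_i}{p}-\theta(\lambda)\Bigr)
\]
whenever $v+L_m\ge T_i(\lambda)$.

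Your route to the threshold structure itself is genuinely different. The paper gets the threshold form from invariance of $[T_i(\lambda),\infty)$ under $\mathbf e_m$ plus a one-step deviation principle, and simply asserts that $\mathbf e_m$ is optimal below $T_i(\lambda)$. You instead prove single crossing of $D_i$ from the slope bound $h_i(v+1)-h_i(v)\ge\alpha_iL_i/p$, which is strictly stronger than Lemma~\ref{lem:h_monotone_full}. You then pin down $T_i(\lambda)$ by evaluating $D_i$ exactly at $T_i-1$ and $T_i$, where $L_m\ge 1$ puts every $v+l$ in the active region. This gives a real justification of both branches of \eqref{eq:threshold_policy} and of the ceiling; the price is the extra lemma.

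That lemma is where your sketch is not yet a proof. Write $\delta(v)\coloneqq h_i(v+1)-h_i(v)$. For $p<1$, the $\mathbf e_i$-recursion gives $\delta(v+1)-\frac{\alpha_iL_i}{p}=(1-p)^{-1}\bigl(\delta(v)-\frac{\alpha_iL_i}{p}\bigr)$. Monotonicity therefore forces the bound only on an \emph{infinite} tail where $\mathbf e_i$ is used. The existence of such a tail is part of what you are proving, and ``should extend'' is not an argument.

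A direct fix is to difference \eqref{eq:two_action_bellman} and use $\min\{x_1,y_1\}-\min\{x_0,y_0\}\ge\min\{x_1-x_0,\,y_1-y_0\}$. This gives
\[
\delta(v)\ \ge\ (1-p)\,\delta(v+1)+\min\Bigl\{\alpha_iL_i,\ \alpha_iL_m+\sum_{l\ge L_m}p\,p^{(m)}_l\,\delta(v+l)\Bigr\}.
\]
Set $\underline\delta\coloneqq\inf_v\delta(v)$, which lies in $[0,\infty)$ by Lemma~\ref{lem:h_monotone_full}. Taking infima gives $p\underline\delta\ge\min\{\alpha_iL_i,\ \alpha_iL_m+p\underline\delta\}$. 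The second branch is impossible since $\alpha_iL_m>0$, so $\delta(v)\ge\alpha_iL_i/p$ holds for every $v$ at once.

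One further point: your Steps 3--4 need $h_i$ to be \emph{exactly} affine on $[T_i(\lambda),\infty)$, not merely to have slope at least $\alpha_iL_i/p$. That amounts to discarding the homogeneous $(1-p)^{-v}$ solution, i.e.\ choosing the linear-growth solution of the Bellman equation. The paper takes this for granted as well.
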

\begin{proof}
 See Appendix~\ref{app:threshold_formula_theta}
\end{proof}
Theorem~\ref{thm:threshold_structure} shows that the threshold
$T_i(\lambda)$ is determined by the (unknown) average cost $\theta(\lambda)$. We now
derive a second relation between $\theta(\lambda)$ and $T_i(\lambda)$ that enables
efficient computation of both quantities for a fixed multiplier $\lambda$.

From Theorem~\ref{thm:threshold_structure} and \eqref{eq:two_action_bellman}, we can write for all $v_i \ge L_i$ as
\begin{equation}
h_i(v_i)=f_{i,1}(v_i)-\theta(\lambda) f_{i,2}(v_i),
\label{eq:h_decomposition}
\end{equation}
\remove{
for all $v_i\ge T_i(\lambda)$ the optimal action is $\mathbf{e}_i$, and the bias function admits the affine form
\begin{equation}
h_i(v_i)
=
\frac{\alpha_i L_i}{p}\,v_i
-
\frac{\theta(\lambda) L_i}{p},
\qquad v_i\ge T_i(\lambda).
\label{eq:h_affine_region}
\end{equation}
From equations  \eqref{eq:two_action_bellman} and \eqref{eq:h_affine_region},it follows that for all $v_i \ge L_i$, the bias function can be expressed as 
\begin{equation}
h_i(v_i)=f_{i,1}(v_i)-\theta_{\lambda} f_{i,2}(v_i),
\qquad v\ge L_i,
\label{eq:h_decomposition}
\end{equation}
}
where $f_{i,1}$ and $f_{i,2}$ are functions independent of $\theta(\lambda)$. In
particular, for the region $v_i\ge T_i(\lambda)$, we have
\begin{equation}
f_{i,1}(v_i)=\frac{\alpha_i L_i}{p}\,v_i,
\qquad
f_{i,2}(v_i)=\frac{L_i}{p}.
\label{eq:f_boundary_region}
\end{equation}
Next, consider the region $L_i\le v_i < T_i(\lambda)$, where the threshold policy schedules the competing source $m$. Define $ c(v_i)\coloneqq T_i(\lambda)-v_i$, so $c(v_i) \ge 1$ in this region. The Bellman equation
\eqref{eq:two_action_bellman} reduces to
\begin{equation}
\begin{aligned}
h_i(v_i)
=&
(1-p)\,h_i(v_i+1)
+
(\alpha_i v_i-\theta(\lambda))L_m
+
\alpha_i w(L_m)
+\\&
\sum_{l\ge L_m} pp_l^{(m)}\,h_i(v_i+l),
\label{eq:bellman_region_k}
\end{aligned}
\end{equation}
Since $T_i(\lambda)$ is the threshold, the term $h_i(v_i+l)$
must be treated differently depending on whether
$v_i+l < T_i(\lambda)$ or $v_i+l \ge T_i(\lambda)$.
For $l \ge c(v)$ we have $v_i+l \ge T_i(\lambda)$, so $h_i(v_i+l)$ is affine. Substituting into the Bellman equation and collecting the coefficients of $\theta(\lambda)$ gives the recursions for $f_{i,1}(\cdot)$ and $f_{i,2}(\cdot)$ in the region $L_i \le v_i < T_i(\lambda)$.
\begin{equation}
\begin{aligned}
f_{i,1}(v_i)
&=
\alpha_i v_i L_m
+
\alpha_i w(L_m)
+
(1-p)\,f_{i,1}(v_i+1)
+\\&
\sum_{L_m\le l\le c(v)} p_l^{(m)}\,f_{i,1}(v_i+l)
+
\sum_{l>c(v)} p_l^{(m)}\,
\frac{\alpha_i L_i}{p}(v_i+l),
\end{aligned}
\label{eq:f1_recursion}
\end{equation}
and
\begin{equation}
\begin{aligned}
f_{i,2}(v_i)
&=
L_m
+
(1-p)\,f_{i,2}(v_i+1)
+\\&
\sum_{L_k\le l\le c(v)} p_l^{(m)}\,f_{i,2}(v_i+l)
+
\sum_{l>c(v)} p_l^{(k)}\,
\frac{L_i}{p}.
\end{aligned}
\label{eq:f2_recursion}
\end{equation}
From equation \eqref{eq:f_boundary_region} and the
recursions \eqref{eq:f1_recursion}--\eqref{eq:f2_recursion} we uniquely determine
$f_{i,1}(v_i)$ and $f_{i,2}(v_i)$ for all $v_i\ge L_i$.
Finally, we obtain a closed-form expression for $\theta(\lambda)$ using \eqref{eq:h_decomposition};
\begin{equation}
\begin{aligned}
\lefteqn{\theta(\lambda)
=}\\
&
\frac{
p\!\left(
\lambda L_i
+ p \sum_{l\ge L_i} p_l^{(i)} f_{i,1}(l)
+ \alpha_i w(L_i)
\right)
+ \alpha_i L_i(1-p)
}{
p^2 \sum_{l\ge L_i} p_l^{(i)} f_{i,2}(l)
}.
\label{eq:theta_closed_form}
\end{aligned}
\end{equation}
The equations \eqref{eq:threshold_formula_theta} and \eqref{eq:theta_closed_form}
define an implicit relationship between the threshold $T_i(\lambda)$ and the
unknown average cost $\theta(\lambda)$. We compute $(T_i(\lambda),\theta_{\lambda})$ via a fixed-point iteration as shown in Algorithm~\ref{alg:threshold_fixed_point}.

\begin{lemma}[Monotonicity in $\lambda$]
\label{lem:monotonicity_in_lambda}
For each source $i$, the threshold $T_i(\lambda)$ and the corresponding average cost $\theta(\lambda)$ are nondecreasing functions of the multiplier $\lambda$. That is, for any $\lambda_1 \le \lambda_2$,
\[
T_i(\lambda_1) \le T_i(\lambda_2)
\quad \text{and} \quad
\theta(\lambda_1) \le \theta(\lambda_2).
\]
\end{lemma}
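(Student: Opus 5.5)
The plan is to get monotonicity of $\theta(\lambda)$ from a policy-wise affine representation of the single-source average cost, and then to read off monotonicity of $T_i(\lambda)$ from the explicit formula \eqref{eq:threshold_formula_theta} in Theorem~\ref{thm:threshold_structure}.

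\emph{Step 1 (policy-wise affine cost).} Fix any stationary policy $\pi$ for the single-source SMDP of source $i$. By Lemma~\ref{lem:dominant_competitor}, without loss of optimality $\pi$ chooses between $\mathbf{e}_i$ and $\mathbf{e}_m$, and $m=m(i)$ does not depend on $\lambda$. By \eqref{eq:lagrangian_reduced}, its long-run average cost is
\[
\theta_\pi(\lambda)=A_\pi+\lambda B_\pi ,
\]
with
\[
A_\pi \coloneqq \lim_{K\to\infty}\frac{\mathbb{E}_\pi\left[\sum_{k=0}^K g_i(v_i(k),a(k))\right]}{\mathbb{E}_\pi\left[\sum_{k=0}^K S(a(k))\right]},
\qquad
B_\pi \coloneqq \lim_{K\to\infty}\frac{\mathbb{E}_\pi\left[\sum_{k=0}^K a_i(k)L_i\right]}{\mathbb{E}_\pi\left[\sum_{k=0}^K S(a(k))\right]}.
\]
The key features are these:
\begin{itemize}
\item The state dynamics and the stage costs $g_i$ do not involve $\lambda$, so neither $A_\pi$ nor $B_\pi$ depends on $\lambda$.
\item $B_\pi$ is the long-run fraction of time source $i$ is active under $\pi$. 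Hence $0\le B_\pi\le 1$, since its numerator is bounded by its denominator term by term.
\end{itemize}
To make the limits well defined, I would restrict to the threshold policies of Theorem~\ref{thm:threshold_structure}. Each of these induces a positive-recurrent Markov renewal process, so the renewal-reward theorem applies.

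\emph{Step 2 ($\theta$ is nondecreasing).} The optimal average cost is
\[
\theta(\lambda)=\min_\pi \theta_\pi(\lambda),
\]
where it suffices to minimize over the threshold family indexed by $T$. This holds because the Bellman equation \eqref{eq:two_action_bellman} characterizes the optimal cost, and by Theorem~\ref{thm:threshold_structure} it is attained by a threshold policy. Take $\lambda_1\le\lambda_2$ and let $\pi_2$ be optimal at $\lambda_2$. Then
\[
\theta(\lambda_1)\le A_{\pi_2}+\lambda_1B_{\pi_2}\le A_{\pi_2}+\lambda_2 B_{\pi_2}=\theta(\lambda_2),
\]
using $B_{\pi_2}\ge0$. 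As a by-product, $\theta$ is concave, being a minimum of affine functions, which is consistent with the dual being a concave maximization.

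\emph{Step 3 (threshold).} In \eqref{eq:threshold_formula_theta}, the quantities $\alpha_i>0$, $L_m$, $L_i$ and $p$ are all independent of $\lambda$. So $T_i(\lambda)=\lceil \theta(\lambda)/\alpha_i - c_i\rceil$ with $c_i$ a constant. The map $x\mapsto\lceil x/\alpha_i-c_i\rceil$ is nondecreasing, and composing it with Step 2 gives $T_i(\lambda_1)\le T_i(\lambda_2)$.

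\emph{Main obstacle.} The delicate point is Step 1: the countable state space requires justifying that $\theta(\lambda)$, defined through the Bellman equation, really equals the infimum over policies of the ratio costs, and that the infimum is attained by a stationary threshold policy with well-defined limits. I would handle this by invoking the existence conditions cited after \eqref{eq:single_flow_bellman_general} (Puterman, Sennott) together with Theorem~\ref{thm:threshold_structure}. If attainment were problematic, I would fall back on an $\epsilon$-optimal policy at $\lambda_2$; the same inequality chain then gives $\theta(\lambda_1)\le\theta(\lambda_2)+\epsilon$ for every $\epsilon>0$.
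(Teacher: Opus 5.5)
Your proposal is correct and follows essentially the same route as the paper. The paper also notes that, for each fixed threshold policy, $\lambda$ enters the cost $\theta_T(\lambda)$ only through the penalty term weighted by the nonnegative activation fraction, takes $\theta(\lambda)=\min_T\theta_T(\lambda)$, and then carries monotonicity over to $T_i(\lambda)$ via the ceiling formula \eqref{eq:threshold_formula_theta}; you are just more explicit, writing the affine form $A_\pi+\lambda B_\pi$ and noting that $m(i)$, and hence $L_m$, does not depend on $\lambda$.
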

\begin{proof}
The proof is provided in Appendix~\ref{app:monotonicity_in_lambda}.
\end{proof}

\begin{algorithm}[]
\footnotesize
\caption{Fixed-point iteration to compute $T_i(\lambda)$}
\label{alg:threshold_fixed_point}

\KwIn{$i,\lambda,m\neq i$, $(p,\alpha_i,L_i,L_m)$, $\beta\in(0,1)$, 
$\theta(\lambda)^{(0)}$, $\varepsilon>0$}
\KwOut{$T_i(\lambda)$ and $\theta(\lambda)$}

$n \leftarrow 0$\;

\Repeat{$|\theta{\lambda}^{(n+1)}-\theta(\lambda)^{(n)}|\le \varepsilon$}{

  $T_i^{(n)} \leftarrow
  \left\lceil
  \frac{\theta(\lambda)^{(n)}}{\alpha_i}
  -\frac{L_m-1}{2p}
  -\frac{L_i}{p}
  \right\rceil$\;

  Compute $f_{i,1}(\cdot), f_{i,2}(\cdot)$ using 
  \eqref{eq:f1_recursion}--\eqref{eq:f2_recursion}\;

  $\bar{\theta}(\lambda)^{(n)} \leftarrow 
  \theta(\lambda)\!\left(T_i^{(n)}\right)$ using 
  \eqref{eq:theta_closed_form}\;

  $\theta(\lambda)^{(n+1)} \leftarrow 
  \beta\,\theta(\lambda)^{(n)} 
  + (1-\beta)\,\bar{\theta}(\lambda)^{(n)}$\;

  $n \leftarrow n+1$\;
}

$T_i(\lambda)\leftarrow T_i^{(n-1)}$\;
$\theta(\lambda)\leftarrow \theta(\lambda)^{(n)}$\;

\end{algorithm}

\section{Lagrange Index policy}
\label{Sec:LagrangeIndexpolicy}
 \subsection{Dual update via average activation fractions}

For a fixed multiplier $\lambda$, the dual problem \eqref{eq:dual_problem}
decomposes into $N$ independent single-source SMDPs, each solved as in Section~\ref{sec:singleflowprob}. To solve the outer maximization over $\lambda$ in \eqref{eq:dual_problem}, we update $\lambda$ using bisection method. The derivative of the Lagrangian with respect to $\lambda$ is given by $(\sum_{i=1}^N \mu_i(\lambda) - 1)$,
where $\mu_i(\lambda)$ denotes the long-run fraction of time (in slots)
during which source $i$ is scheduled under the single-flow optimal policy for
multiplier $\lambda$.

To compute $\mu_i(\lambda)$, we consider the policy $\pi_i^*$ in \eqref{eq:threshold_policy} and evaluate the long run fraction of time for which the action $i$ is chosen. This is done as follows; The one-step cost function is modified to
\begin{equation}
\hat{g}_i(v_i, \pi_i^*(v)) = \begin{cases}
    1, \quad v_i \ge T_i(\lambda)\\
    0, \quad L_i \le v_i < T_i(\lambda)
\end{cases}    
\end{equation}
and we consider the policy evaluation equation 
\begin{equation}
    A_i^{\pi_i^*}(v_i) = \left(\hat{g}_i(v_i, \pi_i^*(\lambda)) - \mu_i(v)\right)L_i + \mathbb{E}[A_i^{\pi_i^*}(v_i')]
    \label{eq:bias_function_mu}
\end{equation}
where $A_i^{\pi_i^*}(\cdot)$ is the bias function and $v_i'$ is the next state to which the process transitions.



From \eqref{eq:bias_function_mu}, we can write for all $v\ge L_i$,
\begin{equation}
A_i^{\pi_i^*}(v_i)=A_{i,1}(v_i)-\mu_i(\lambda)\,A_{i,2}(v_i),
\label{eq:A_decomposition}
\end{equation}
where $A_{i,1}$ and $A_{i,2}$ are independent of $\mu_i(\lambda)$.
Substituting \eqref{eq:A_decomposition} into \eqref{eq:bias_function_mu} we get
linear recursions for $A_{i,1}$ and $A_{i,2}$ over the region
$L_i\le v_i <T_i(\lambda)$, with boundary conditions for $v_i\ge T_i(\lambda)$ given
by,
$A_{i,1}(v_i)=
A_{i,2}(v_i)=\frac{L_i}{p}.$ Finally, we get the activation fraction in closed form
\begin{equation}
\mu_i(\lambda)
=
\frac{
\sum_{l\ge L_i} p_l^{(i)}A_{i,1}(l)
}{
\sum_{l\ge L_i} p_l^{(i)}A_{i,2}(l)
}.
\label{eq:theta_ai_ratio}
\end{equation}






Having computed $\mu_i(\lambda)$ policies for a fixed multiplier
$\lambda$, we update $\lambda$ to solve the outer maximization in
\eqref{eq:dual_problem} using bisection method as given in Algorithm~\ref{alg:bisection_lagrange}. 
\remove{
where $\{\eta_n\}$ is a stepsize sequence chosen to ensure convergence. Since the dual optimization is one-dimensional, an alternative is to use a bisection search over $\lambda$.} 
\begin{remark}[On existence of an exact multiplier]
In general, there may not exist a multiplier $\lambda^\star$ such that
\[
\sum_{i=1}^N \mu_i(\lambda^\star)=1.
\]
This is due to the discrete nature of the actions, which makes the mapping
$\lambda \mapsto \sum_{i=1}^N \mu_i(\lambda)$ piecewise constant and
possibly discontinuous. Consequently, the relaxed constraint may not be met
with equality for any single value of $\lambda$.
Nevertheless, there exist multipliers $\lambda_-^\star$ and $\lambda_+^\star$
such that
\[
\sum_{i=1}^N \mu_i(\lambda_-^\star) < 1
\qquad\text{and}\qquad
\sum_{i=1}^N \mu_i(\lambda_+^\star) > 1,
\]
which ensures that the dual optimality condition is satisfied. The resulting
multiplier is therefore dual optimal. However, since the feasible set of the
original problem does not satisfy a strict interior condition, a nonzero dual
gap may exist between the primal and dual problems.
\end{remark}
The
activation fractions satisfy a monotonicity property: as $\lambda$ increases,
each $\mu_i(\lambda)$ decreases, which makes bisection algorithm particularly effective.

\subsection{Lagrange index policy}
Having obtained the optimal multiplier $\lambda^\star$, we define the
Lagrange index for each source $i$ at state $v_i$ as
\begin{equation}
\gamma_i(v_i)
\triangleq
Q_{i,i}(v_i)-Q_{i,m}(v_i),
\end{equation}
which measures the incremental cost incurred by serving source $i$ over competing action. The resulting scheduling policy selects, at each decision
instant, the source with the smallest Lagrange index given in \eqref{eq:lag_index_arm}.
For classical restless bandit problems formulated as discrete-time MDPs, Lagrange index policies are known to be asymptotically optimal as
$N$ increases~\cite{Gast_2024}. 
Such guarantees have not been established for semi-Markov decision
processes (SMDPs). We adopt the Lagrange index policy as a heuristic and evaluate it numerically, where it consistently outperforms the NSRP policy in~\cite{zhao}.

\begin{algorithm}[htbp]
\footnotesize
\caption{Compute Lagrange index}
\label{alg:bisection_lagrange}

\KwIn{$\{L_i\}_{i=1}^N$, $p$, $\lambda_{\text{low}}$, $\lambda_{\text{high}}$, $\varepsilon$}
\KwOut{$\lambda^\star$, $\{\gamma_i(\cdot)\}_{i=1}^N$}

\Repeat{$|\lambda_{\text{high}}-\lambda_{\text{low}}| < \varepsilon$}{

    $\lambda \leftarrow \dfrac{\lambda_{\text{low}}+\lambda_{\text{high}}}{2}$\;

    Compute $T_1(\lambda),\dots,T_N(\lambda)$ using Algorithm~\ref{alg:threshold_fixed_point}\;

    Obtain $\mu_1(\lambda),\dots,\mu_N(\lambda)$ using \eqref{eq:theta_ai_ratio}\;

    \If{$\sum_{i=1}^N \mu_i(\lambda) - 1 < 0$}{
        $\lambda_{\text{high}} \leftarrow \lambda$\;
    }
    \Else{
        $\lambda_{\text{low}} \leftarrow \lambda$\;
    }
}

$\lambda^\star \leftarrow \dfrac{\lambda_{\text{low}}+\lambda_{\text{high}}}{2}$\;

\For{$i=1$ \KwTo $N$}{
    $\gamma_i(\cdot) \leftarrow Q_{i,i}(\cdot) - Q_{i,j}(\cdot)$\;
}

\Return $\lambda^\star$, $\{\gamma_i(\cdot)\}_{i=1}^N$\;

\end{algorithm}

\section{Numerical Results}
\label{sec:numerical_results}
We compare the proposed Lagrange index policy with NSRP~\cite{zhao}, Greedy, Scaled Greedy, and the Whittle index policy. The Greedy policy selects the flow with the largest AoI, while Scaled Greedy selects $\arg\max_i \alpha_i v_i$. Under NSRP, a probability mass function over sources is obtained by solving the optimization problem in~\cite{zhao}, and a source is selected randomly according to this distribution. For reliable channels ($p=1$), we additionally compare with the Whittle index policy for heterogeneous sources derived in~\cite{tripathi_pross_compute}.
Figure~\ref{fig:aoi_nsrp_paper} compares the NSRP, Greedy, Scaled Greedy, and proposed Lagrange index policies as the channel reliability varies. As $p$ increases, all policies improve; however, the Lagrange index policy consistently achieves the lowest long-run average weighted AoI across the range of reliabilities.

Figures~\ref{fig:aoi_varyinglengths_comparison}, \ref{fig:aoi_varying_n_comparison}, and \ref{fig:aoi_alpha_different_comparison} examine the impact of update-length heterogeneity, system scaling, and weight asymmetry, respectively. In each case, the left subplots (Figures~\ref{fig:aoi_varyinglengths_p05}, \ref{fig:aoi_varying_n_p07}, and \ref{fig:aoi_alpha_different_p07}) correspond to unreliable channels ($p<1$) and compare NSRP, Lagrange index, Greedy, and Scaled Greedy policies, where the proposed policy consistently performs best. The right subplots (Figures~\ref{fig:aoi_varyinglengths_p1}, \ref{fig:aoi_varying_n_p1}, and Figure~\ref{fig:aoi_alpha_different_p1} corresponds to the reliable channel case ($p=1$) and additionally includes the Whittle index policy. In this regime, the proposed Lagrange index policy consistently outperforms or closely matches the Whittle index policy across all considered scenarios, demonstrating its effectiveness even in settings where Whittle indices are available. Importantly, while the Whittle index policy is currently characterized only for the reliable channel case, the Lagrange index framework naturally extends to unreliable channels ($p<1$), where no Whittle index characterization is yet available.
\vspace{-5mm}
\begin{figure}[htbp]
    \centering
    \includegraphics[width=0.9\linewidth]{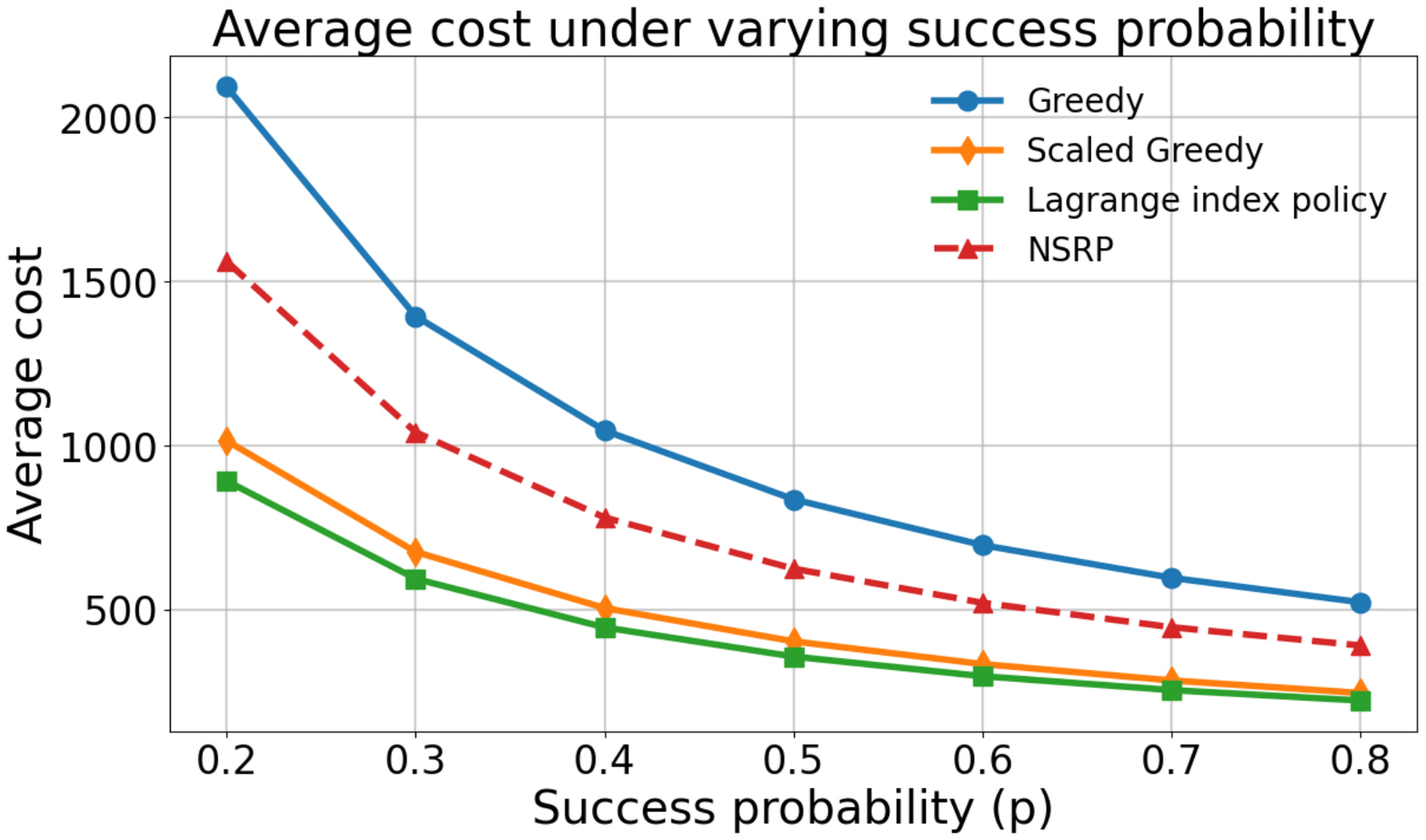}
\caption{Simulation results under varying channel reliability. The network has $N=10$ sources split evenly into two classes: Class~1 with $(L_i,\alpha_i)=(2,5)$ and Class~2 with $(L_i,\alpha_i)=(50,1)$. The common channel reliability varies over $p\in\{0.2,0.3,\ldots,0.8\}$.}
    \label{fig:aoi_nsrp_paper}
\end{figure}
\begin{figure}[]
    \centering

    \begin{subfigure}[t]{0.49\linewidth}
        \centering
        \includegraphics[width=\linewidth]{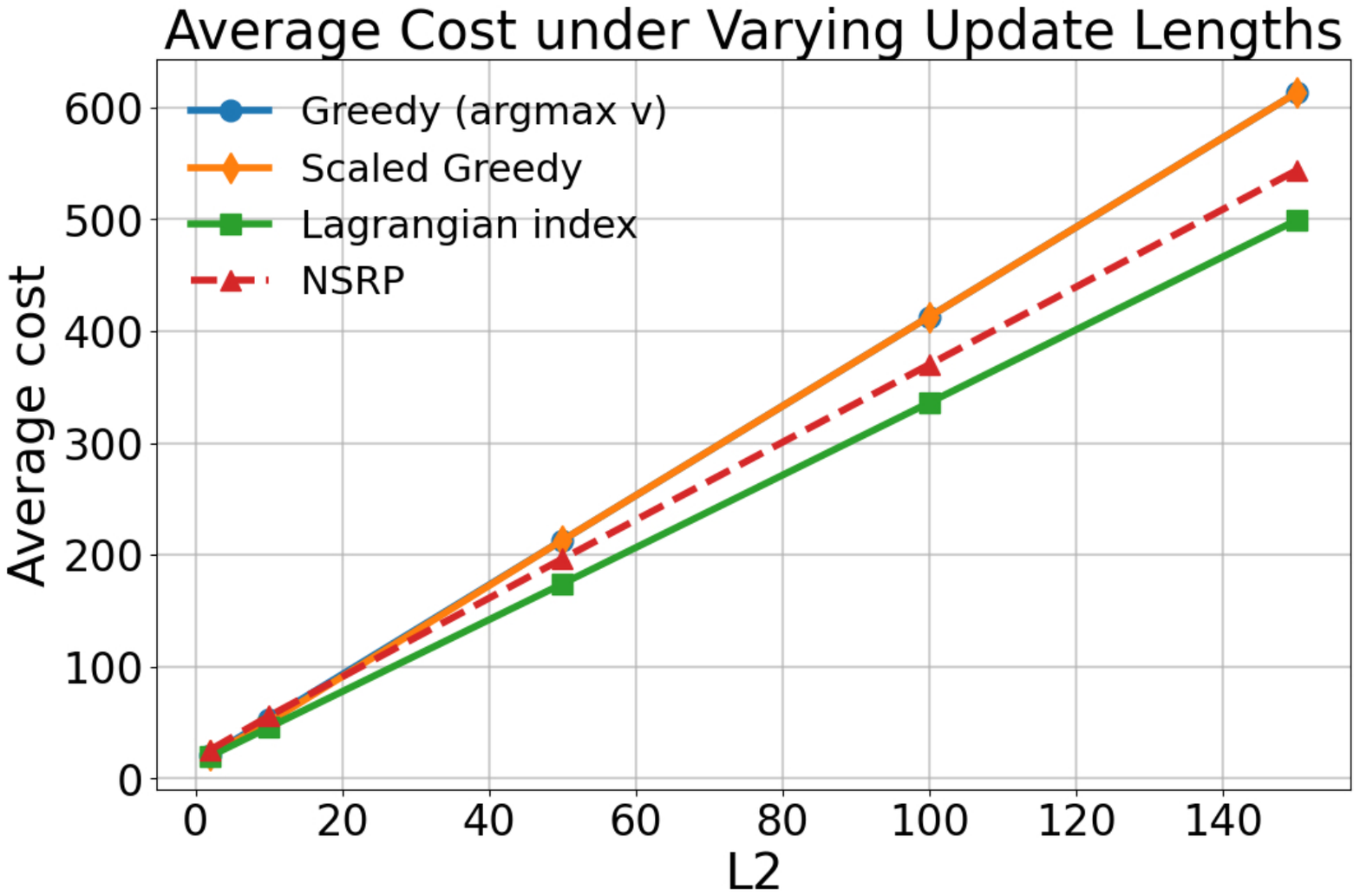}
        \caption{$p=0.5$}
        \label{fig:aoi_varyinglengths_p05}
    \end{subfigure}
    \hfill
    \begin{subfigure}[t]{0.49\linewidth}
        \centering
        \includegraphics[width=\linewidth]{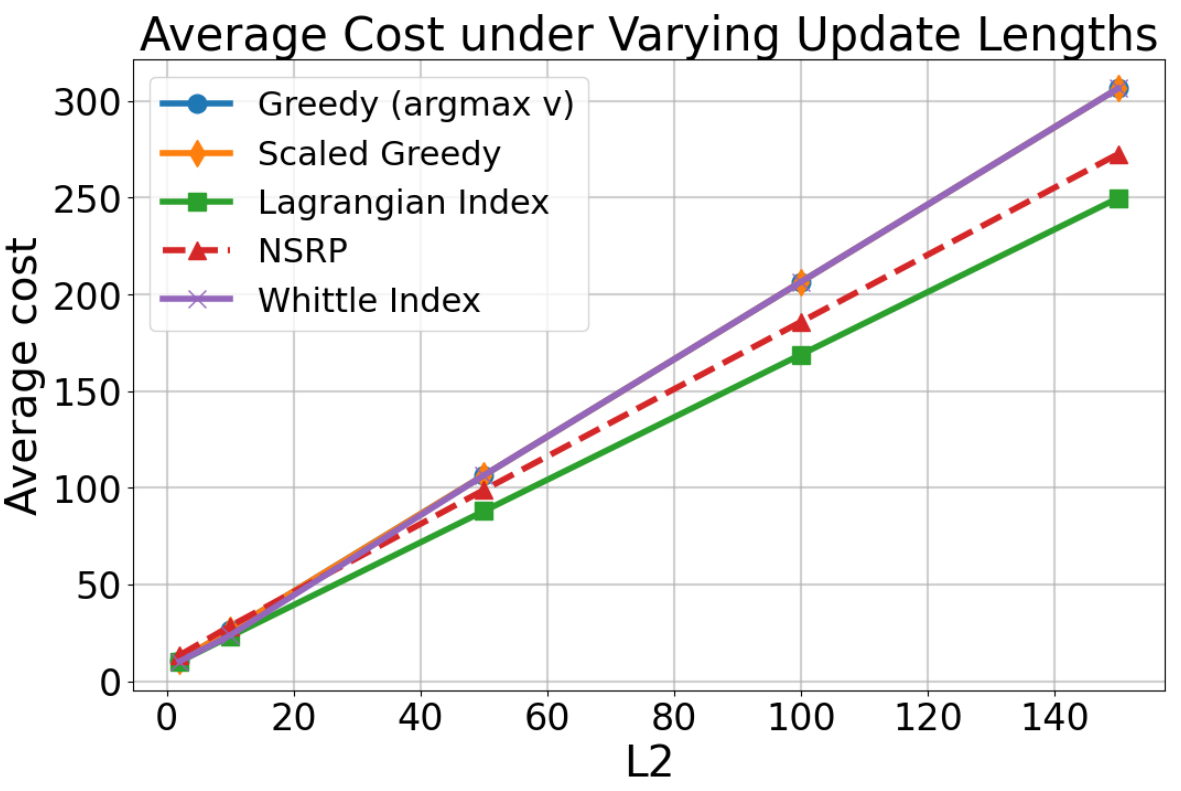}
        \caption{$p=1$}
        \label{fig:aoi_varyinglengths_p1}
    \end{subfigure}
    \caption{
    Average weighted AoI versus update-length heterogeneity. We consider $N=2$ sources with $(L_1,\alpha_1)=(2,5)$ and $(L_2,\alpha_2)=(L_2,1)$, where
$L_2 \in \{2,10,50,100,150\}$. The left panel corresponds to $p=0.5$, where the greedy and scaled greedy policies achieve identical performance. The right panel corresponds to $p=1$, where the greedy, scaled greedy, and Whittle index policies acheive identical performance.
    }
    \label{fig:aoi_varyinglengths_comparison}
    
\end{figure}
\vspace{-5mm}
\begin{figure}[]
\centering

\begin{subfigure}[t]{0.48\linewidth}
    \centering
    \includegraphics[width=\linewidth]{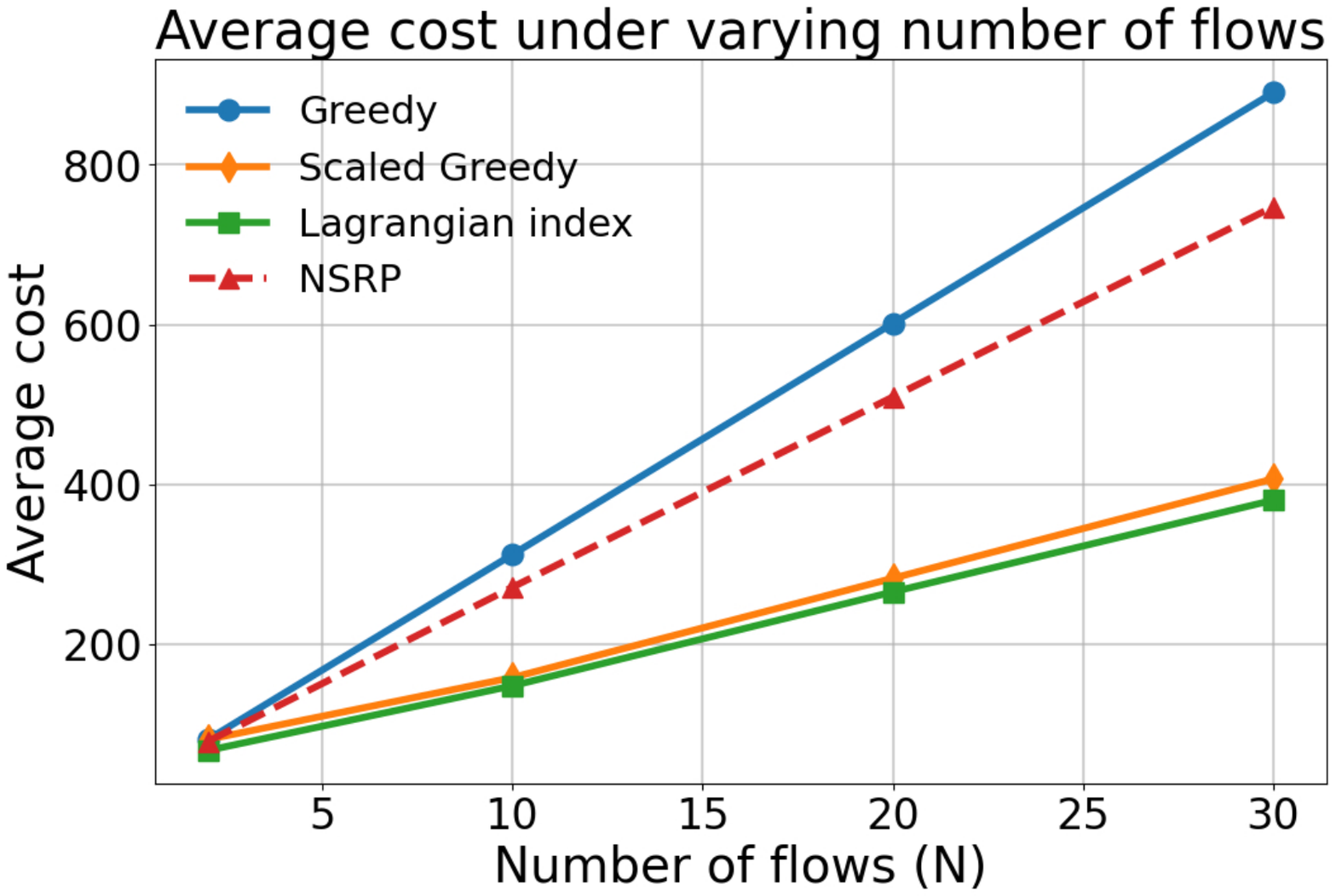}
    \caption{$p=0.7$}
    \label{fig:aoi_varying_n_p07}
\end{subfigure}
\hfill
\begin{subfigure}[t]{0.49\linewidth}
    \centering
    \includegraphics[width=\linewidth]{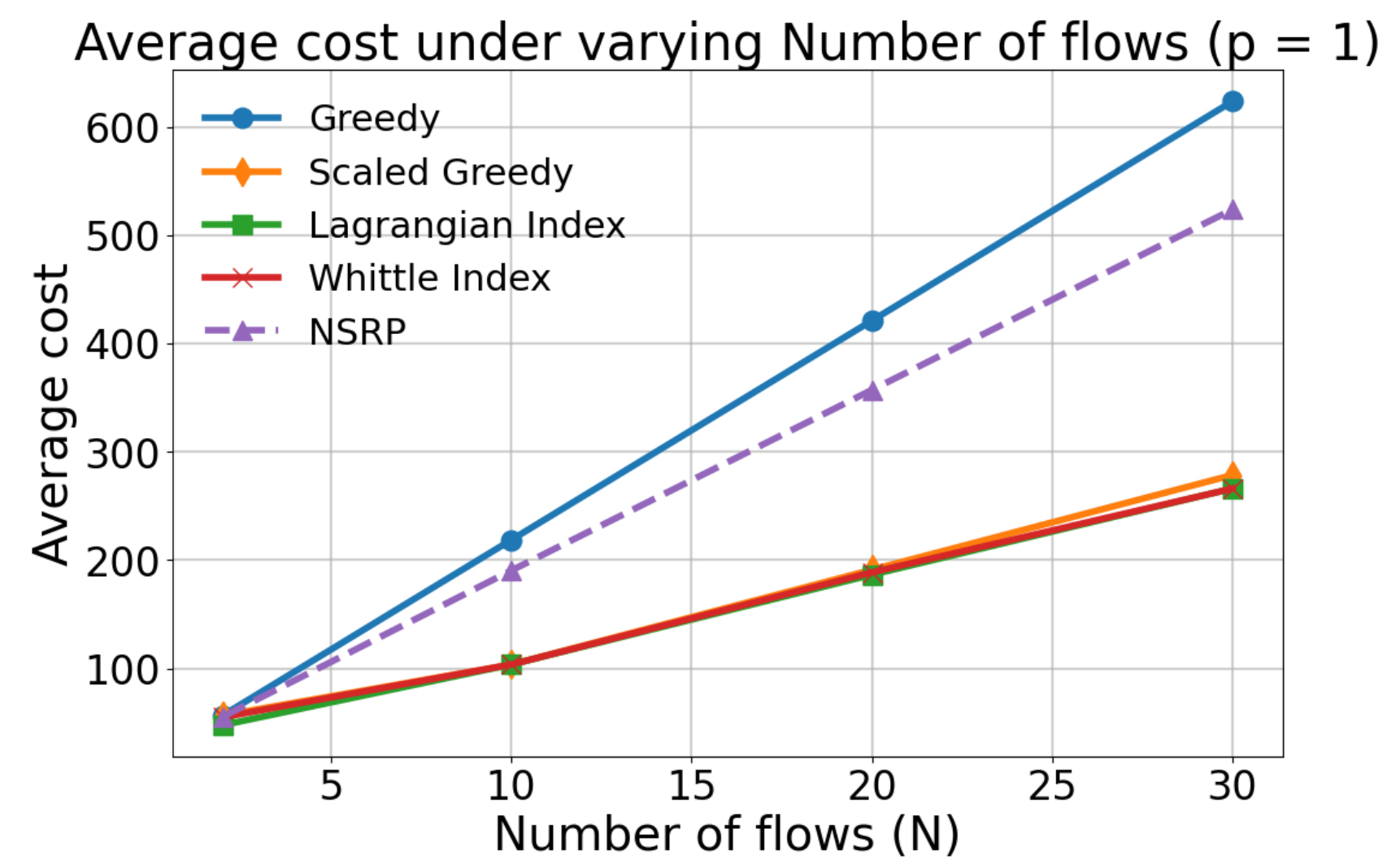}
    \caption{$p=1$}
    \label{fig:aoi_varying_n_p1}
\end{subfigure}

\caption{
Average cost versus the number of sources. We consider two classes with an equal number of sources in each class. The total number of sources is $N \in \{2,10,20,30\}$.
Class--1 sources have $(L_1,\alpha_1)=(2,5)$, and class--2 sources have $(L_2,\alpha_2)=(25,1)$. The left panel corresponds to $p=0.7$, and the right panel to $p=1$. The Whittle index and Lagrange index policies achieve identical performance.
}
\label{fig:aoi_varying_n_comparison}

\end{figure}

\begin{figure}[t]
    \centering
    \begin{subfigure}[t]{0.49\linewidth}
        \centering
        \includegraphics[width=\linewidth]{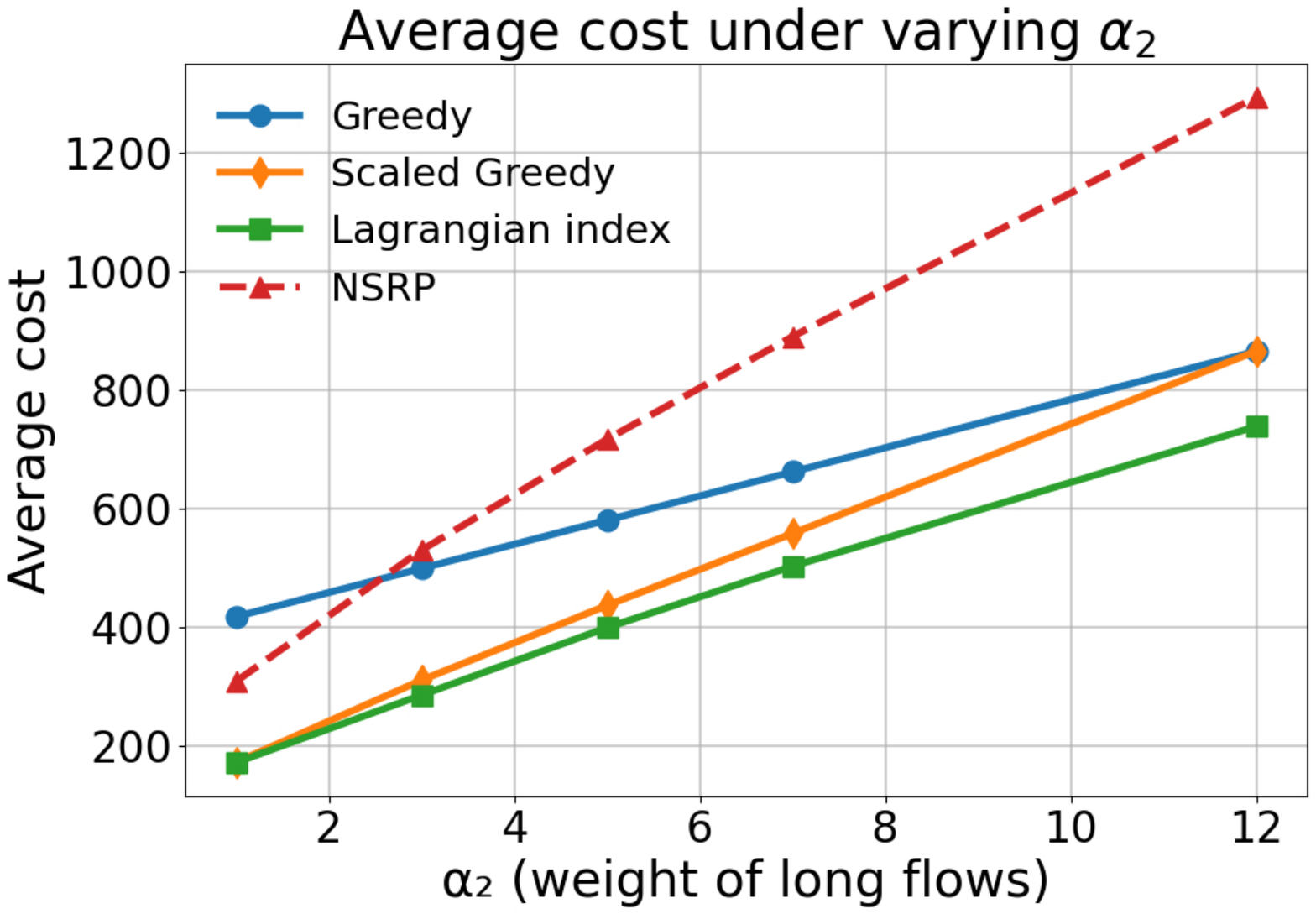}
        \caption{$p=0.7$}
        \label{fig:aoi_alpha_different_p07}
    \end{subfigure}
    \hfill
    \begin{subfigure}[t]{0.48\linewidth}
        \centering
        \includegraphics[width=\linewidth]{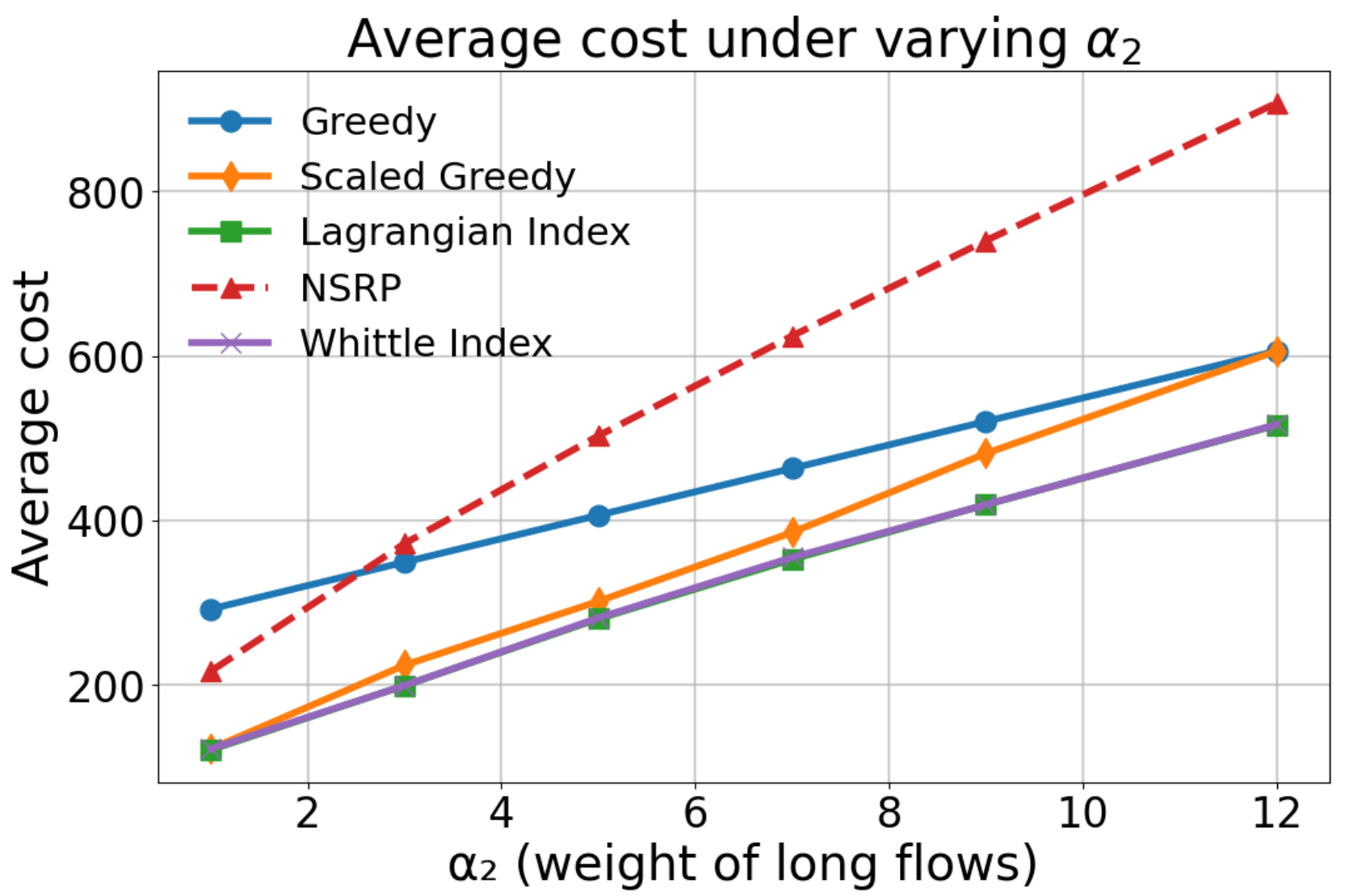}
        \caption{$p=1$}
        \label{fig:aoi_alpha_different_p1}
    \end{subfigure}
    \caption{
    Average weighted AoI versus $\alpha_2$. We consider $N=10$ flows, with five class--1 sources having $(L_1,\alpha_1)=(2,12)$ and five class--2 sources having $(L_2,\alpha_2)=(15,\alpha_2)$, where $\alpha_2 \in \{1,3,5,7,9\}$. The left panel corresponds to $p=0.7$, and the right panel to $p=1$. The Whittle index and Lagrange index policies achieve nearly identical performance.
    }
\label{fig:aoi_alpha_different_comparison}
\end{figure}

\section{Conclusion}
\label{Sec:conclusion}

In this paper, we developed a Lagrange index heuristic to minimize the weighted average AoI. The problem was formulated as a RMAB with SMDP dynamics. We proposed an heuristic based on the Lagrange index policy. We showed that the proposed policy outperforms both the NSRP in \cite{zhao} and a scaled greedy policy, demonstrating that structural analysis translates into tangible performance gains. For the weighted AoI model, we established key structural properties, most notably threshold behavior and leveraged them to derive efficient algorithms to compute the Lagrange indices. Beyond the specific AoI setting, the proposed framework extends naturally to a broader class of RMAB problems with SMDP dynamics, thereby expanding the scope of index-based control beyond standard discrete-time formulations. Future work will consider systems with switching and alternative update-generation models, building on the structural foundations developed here.
\bibliography{Ref}
\bibliographystyle{ieeetr}
\appendix
\subsection{Proof of Lemma~\ref{lem:h_monotone_full}}
\label{app:bias_monotone}

\begin{proof}

\textbf{Step 1: Convergence of RVI.}

Consider the single–flow SMDP under a fixed multiplier $\lambda$.
Let $\pi^{(i)}$ denote the stationary policy that schedules flow $i$
at every decision stage.

Under this policy, the AoI process $\{v(k)\}$ evolves as follows:
\[
v \to 
\begin{cases}
v+1, & \text{with probability } (1-p),\\
l \ge L_i, & \text{with probability } p_l^{(i)}.
\end{cases}
\]
The state space is $\{L_i, L_i+1, \dots\}$.

Since from any state $v \ge L_i$ there is positive probability
of transitioning to any state $l \ge L_i$ with $p_l^{(i)} > 0$,
the induced Markov chain is irreducible on its state space.

We now establish positive recurrence using a Foster–Lyapunov drift argument.
Consider the Lyapunov function $V(v) = v$.
The conditional drift satisfies
\begin{align*}
\mathbb{E}[v(k+1) - v(k) &\mid v(k)=v]
\\&= (1-p)(1)
+ \sum_{l\ge L_i} p_l^{(i)} (l - v) \\
&= (1-p) + L_i - (1-p)
- v p.
\end{align*}

Since $\sum_{l\ge L_i} p_l^{(i)} = p$
and $\sum_{l\ge L_i} p_l^{(i)} l = L_i - (1-p)$,
the drift simplifies to
\[
\mathbb{E}[v(k+1) - v(k) \mid v(k)=v]
= L_i - p v.
\]

For sufficiently large $v$, the drift is strictly negative.
Therefore, the Markov chain satisfies the Foster–Lyapunov condition
and is positive recurrent. Hence the induced Markov Chain under the policy $\pi^{(i)}$ is unichain. 
Since the one–stage cost is nonnegative in $v$ \cite{sennott1998stochastic}, 
the average cost under $\pi$ is finite.
Under this unichain condition and boundedness of the relative value
differences, standard SMDP results imply that Relative Value Iteration (RVI)
converges (up to an additive constant) to a solution $h_i$ of
\eqref{eq:smdp_single_flow_bellman_Q} \cite{sennott1998stochastic}.


\medskip
\textbf{Step 2: Monotonicity of the iterates.}

We prove by induction that
\[
v_1 \le v_2
\;\Longrightarrow\;
h^{(k)}(v_1) \le h^{(k)}(v_2),
\qquad \forall k.
\]

\emph{Base case:}
$h^{(0)} \equiv 0$ is nondecreasing.

\medskip
\emph{Induction step:}
Assume $h^{(k)}$ is nondecreasing.
Let $v_1 \le v_2$.

The stage cost $g_i(v,\mathbf a)$ is nondecreasing in $v$.
We verify monotonicity separately for the two types of actions.

\medskip
\noindent
\emph{Case 1: $j=i$.}
\[
Q_{i,i}^{h^{(k)}}(v)
=
g_i(v,\mathbf e_i)
+
(\lambda-\theta)L_i
+
\sum_{l\ge L_i} p_l^{(i)}\, h^{(k)}(l).
\]
The last two terms are independent of $v$.
Since $g_i(v,\mathbf e_i)$ is nondecreasing in $v$,
it follows that
\[
v_1 \le v_2
\;\Longrightarrow\;
Q_{i,i}^{h^{(k)}}(v_1)
\le
Q_{i,i}^{h^{(k)}}(v_2).
\]

\medskip
\noindent
\emph{Case 2: $j\neq i$.}
\[
Q_{i,j}^{h^{(k)}}(v)
=
g_i(v,\mathbf e_j)
-
\theta L_j
+
\sum_{l\ge L_j} p_l^{(j)}\, h^{(k)}(v+l).
\]
The term $-\theta L_j$ is constant in $v$.
Since $g_i(v,\mathbf e_j)$ is nondecreasing in $v$
and $v\mapsto v+l$ is increasing,
the induction hypothesis implies
\[
v_1 \le v_2
\;\Longrightarrow\;
h^{(k)}(v_1+l)
\le
h^{(k)}(v_2+l),
\qquad \forall l.
\]
Taking expectation preserves order, hence
\[
Q_{i,j}^{h^{(k)}}(v_1)
\le
Q_{i,j}^{h^{(k)}}(v_2).
\]

Since $v\mapsto v'$ is increasing and $h^{(k)}$ is nondecreasing,
\[
v_1 \le v_2
\;\Longrightarrow\;
Q_{i,j}^{h^{(k)}}(v_1)
\le
Q_{i,j}^{h^{(k)}}(v_2),
\qquad \forall j.
\]

Taking minima preserves order, hence
\[
(\mathcal T h^{(k)})(v_1)
\le
(\mathcal T h^{(k)})(v_2).
\]
So,
\[
h^{(k+1)}(v_1)
\le
h^{(k+1)}(v_2).
\]

Thus $h^{(k+1)}$ is nondecreasing.

\medskip
By induction, all iterates are nondecreasing.
Since RVI converges pointwise (up to a constant),
and monotonicity is preserved under limits,
the limiting bias function $h_i$ is nondecreasing.
\end{proof}

\subsection{Proof of Lemma~\ref{lem:dominant_competitor}}
\label{app:dominant_competitor}
\begin{proof}

We first consider the corresponding discounted-cost problem with discount
factor $\beta\in(0,1)$. Let $V_i^\beta(v)$ denote the discounted value
function. The Bellman equation can be written as
\[
V_i^\beta(v)
=
(1-p)\beta V_i^\beta(v+1)
+
\min\Big\{
Q_{i,i}^\beta(v),\;
\min_{j\neq i} Q_{i,j}^\beta(v)
\Big\},
\]
where
\[
Q_{i,j}^\beta(v)
=
\begin{cases}
g_i(v,\mathbf e_i)
+
\lambda L_i
+
\beta\displaystyle\sum_{l\ge L_i} p_l^{(i)} V_i^\beta(l),
& j=i, \\[0.8em]
g_i(v,\mathbf e_j)
+
\beta\displaystyle\sum_{l\ge L_j} p_l^{(j)} V_i^\beta(v+l),
& j\neq i.
\end{cases}
\]

For competing actions $j\neq i$, and let $m = \arg\min_{j\neq i} L_j$, the immediate cost term satisfies
\[
g_i(v,\mathbf e_j)
=
\alpha_i \left( v L_j + w(L_j)\right),
\]
where $w(L_j)$ is increasing in $L_j$. Since, $L_m \le L_j$, then
\[
g_i(v,\mathbf e_m)
\leq    
g_i(v,\mathbf e_j),
\]
and
\[
\sum_{l\ge L_j} p_l^{(j)} V_i^\beta(v+l).
\]

Consider the ratio of the corresponding stage-duration probability mass functions:
\[
\frac{p_l^{(j)}}{p_l^{(m)}}
=
\frac{
\binom{l-2}{L_j-2}
}{
\binom{l-2}{L_m-2}
}
p^{L_j-L_m}(1-p)^{-(L_j-L_m)},
\qquad l \ge L_j.
\]
Using the factorial representation of the binomial coefficients,
\[
\frac{
\binom{l-2}{L_j-2}
}{
\binom{l-2}{L_m-2}
}
=
\frac{(L_m-2)!}{(L_j-2)!}
\prod_{k=0}^{L_j-L_m-1}(l-L_m-k),
\]
which is strictly increasing in $l$ over the support $l \ge L_j$.
Hence, $\{S^{(j)}\}$ is increasing in likelihood ratio order. It follows that
\[
S^{(m)} \le_{\mathrm{lr}} S^{(j)}
\quad \Rightarrow \quad
S^{(m)} \le_{\mathrm{st}} S^{(j)},
\]
see, \cite{shaked2007stochastic}.
The monotonicity of the discounted value function follows by the same order-preserving dynamic programming argument used for the average-cost Bellman operator. The only modification is the multiplicative factor $\beta\in(0,1)$ applied to continuation values, which preserves inequalities. Hence the Bellman operator remains monotone, and the discounted value function is nondecreasing in $v$. Since $V_i^\beta(\cdot)$ is nondecreasing, stochastic ordering preserves the
expectation, yielding
\[
\mathbb{E}\!\left[V_i^\beta(v+S^{(m)})\right]
\le
\mathbb{E}\!\left[V_i^\beta(v+S^{(j)})\right],
\]
which establishes the desired inequality.

Hence, for every $\beta\in(0,1)$,
\[
Q_{i,m}^\beta(v_i)
\le
Q_{i,j}^\beta(v_i),
\qquad
\forall v_i,\;\forall j\neq i,
\]
where $m\in\arg\min_{j\neq i} L_j$.

Thus, in the discounted problem, whenever flow $i$ is not scheduled, it is optimal to select the competing flow with the smallest expected stage duration.
Finally, by Blackwell optimality (see \cite{Puterman_book}), if a stationary policy is optimal for all discount factors sufficiently close to $1$, then it is also optimal for the average-cost problem. Since the dominance of action $m$ holds uniformly in $v$ and for all $\beta$, the same action remains optimal in the average-cost SMDP. This completes the proof.
\end{proof}

\subsection{Proof of Theorem~\ref{thm:threshold_structure}}
\label{app:threshold_formula_theta}

\begin{proof}

Fix $\lambda$ and flow index $i$. Let $m \neq i$ denote any competing action.
We use a one–step look ahead argument and the optimality principle \cite{Bertsekas}.

\medskip
\noindent
\textbf{Step 1: Cost of scheduling flow $i$.}

Suppose that at state $v$ we schedule flow $i$, and thereafter continue
scheduling flow $i$ at every decision stage. The corresponding
cost-to-go satisfies
\begin{align*}
C_i(v)
&=
(\alpha_i v + \lambda - \theta_{\lambda})L_i
+ \alpha_i w(L_i)  \\
&\quad
+ (1-p)h_i(v+1)
+ \sum_{l \ge L_i} p_l^{(i)} h_i(l).
\end{align*}

Scheduling of only flow $i$ admits the affine solution
\[
h_i(x)
=
\frac{\alpha_i L_i}{p}\,x
-
\frac{\theta_{\lambda} L_i}{p},
\qquad x \ge T_i(\lambda),
\]
up to an additive constant.

\medskip
\noindent
\textbf{Step 2: One–step look ahead to flow $m$.}

Now consider a deviating policy that schedules flow $m$ at state $v$
for exactly one decision epoch and then switches permanently to
scheduling flow $i$.

The corresponding cost-to-go is
\begin{align*}
C_m(v)
&=
(\alpha_i v - \theta_{\lambda})L_m
+ \alpha_i w(L_m)  \\
&\quad
+ (1-p)h_i(v+1)
+ \sum_{l \ge L_m} p_l^{(m)} h_i(v+l).
\end{align*}

\medskip
\noindent
\textbf{Step 3: Comparison.}

Scheduling flow $i$ is optimal at state $v$ if
\[
C_i(v) - C_m(v) \le 0.
\]

The common continuation term $(1-p)h_i(v+1)$ cancels.
Substituting the affine form of $h_i(\cdot)$ and simplifying yields
a linear inequality in $v$ of the form
\[
v
\ge
\frac{\theta_{\lambda}}{\alpha_i}
-
\frac{L_m-1}{2p}
-
\frac{L_i}{p}.
\]

Define
\[
T_{i}(\lambda)
=
\left\lceil
\frac{\theta_{\lambda}}{\alpha_i}
-
\frac{L_m-1}{2p}
-
\frac{L_i}{p}
\right\rceil.
\]

Then for all $v \ge T_{i}(\lambda)$,
\[
C_i(v) \le C_m(v).
\]

\medskip
\noindent
\textbf{Step 4: Invariance and optimality.}

For $v \ge T_{i}(\lambda)$, if flow $m$ is scheduled,
the next state equals $v+1$ with probability $(1-p)$
or $v+l$ for some $l \ge L_m$.
In all cases, the next state is at least $v$,
and hence remains in the region $[T_{i,m}(\lambda),\infty)$.

Thus, once $v \ge T_{i,m}(\lambda)$, any one–step lookahead
to action $m$ cannot improve the cost,
and the process remains in the region where scheduling $i$
is better.

By the one–step deviation principle for average cost\cite{Bertsekas},
no profitable deviation exists for $v \ge T_{i}(\lambda)$.
we conclude that for all $v \ge T_i(\lambda)$,
scheduling flow $i$ is optimal,
whereas for $v < T_i(\lambda)$ scheduling flow $m \neq i$ is optimal.

Hence the optimal policy is of threshold type.

\end{proof}

\subsection{Proof of Lemma~\ref{lem:monotonicity_in_lambda}}
\label{app:monotonicity_in_lambda}
\begin{proof}
For any fixed threshold policy $T$, let $\theta_T(\lambda)$ denote the long-run average cost obtained when threshold $T$ is used under multiplier $\lambda$.

Under a fixed threshold policy, the multiplier $\lambda$ enters the average cost only through the penalty term associated with transmissions. Since the average transmission rate under any threshold policy is nonnegative, it follows that $\theta_T(\lambda)$ is nondecreasing in $\lambda$. Hence, for any $\lambda_1 \leq \lambda_2$,
\[
\theta_T(\lambda_1)\leq \theta_T(\lambda_2), \qquad \forall T.
\]

Now, the optimal average cost is obtained by minimizing over all feasible thresholds:
\[
\theta(\lambda)=\min_T \theta_T(\lambda).
\]
Therefore,
\[
\theta(\lambda_1)
=
\min_T \theta_T(\lambda_1)
\leq
\min_T \theta_T(\lambda_2)
=
\theta(\lambda_2),
\]
which proves that $\theta(\lambda)$ is nondecreasing in $\lambda$.

Next, from the threshold characterization,
\begin{equation}
T_i(\lambda)
=
\left\lceil
\frac{\theta(\lambda)}{\alpha_i}
-
\frac{L_m-1}{2p}
-
\frac{L_i}{p}
\right\rceil.
\label{eq:threshold_monotone_appendix}
\end{equation}
Since $\theta(\lambda)$ is nondecreasing in $\lambda$, the quantity inside the ceiling operator in \eqref{eq:threshold_monotone_appendix} is also nondecreasing in $\lambda$. The ceiling function preserves monotonicity, and therefore
\[
T_i(\lambda_1)\leq T_i(\lambda_2), \qquad \lambda_1\leq \lambda_2.
\]

Hence, both $\theta(\lambda)$ and $T_i(\lambda)$ are nondecreasing in $\lambda$.
\end{proof}

\end{document}